\documentclass[10pt,journal]{IEEEtran}

\ifCLASSINFOpdf

\else

\fi

\usepackage{amsmath}
\usepackage{amsthm}

\newtheorem{Lemma}{Lemma}
\newtheorem{Theorem}{Theorem}
\newtheorem*{Proof}{Proof}

\usepackage{svg}
\usepackage{subfigure}
\usepackage{amssymb}
\usepackage{booktabs}
\usepackage{multirow}
\hyphenation{op-tical net-works semi-conduc-tor}

\begin{document}

\title{Evolving Network Modeling Driven by the Degree Increase and Decrease Mechanism}

\author{Yuhan Li, Minyu Feng {\em Member, IEEE}, J\"{u}rgen Kurths

\thanks{Manuscript received 27 September 2022; revised 28 January 2023; accepted 12 April 2023. Date of publication 1 May 2023; date of current version 18 August 2023. This work was supported in part by the National Nature Science Foundation of China (NSFC) under Grant No. 62206230,
and in part by the Humanities and Social Science Fund of Ministry
of Education of the People's Republic of China under Grant No. 21YJCZH028. This article
was recommended by Associate Editor H. Zhu. \textit{(Corresponding author:
Minyu Feng.)} }

\thanks{Yuhan Li, Minyu Feng (corresponding author) are with College
of Artificial Intelligence, Southwest University, Chongqing 400715, P. R. China.
(e-mail: myfeng@swu.edu.cn).
}

\thanks{J\"{u}rgen Kurths is with the Potsdam Institute for Climate Impact Research,
14437 Potsdam, Germany.}

\thanks{Color versions of one or more figures in this article are available at https://doi.org/10.1109/TSMC.2023.3268372.}

\thanks{Digital Object Identifier 10.1109/TSMC.2023.3268372}}

\mark{IEEE TRANSACTIONS ON SYSTEMS, MAN, AND CYBERNETICS: SYSTEMS, VOL. 53, NO. 9, SEPTEMBER 2023 }%

\maketitle

\begin{abstract}
Ever since the Barab\'{a}si-Albert (BA) scale-free network has been proposed, network modeling has been studied intensively in light of the network growth and the preferential attachment (PA). However, numerous real systems are featured with a dynamic evolution including network reduction in addition to network growth. In this paper, we propose a novel mechanism for evolving networks from the perspective of vertex degree. We construct a queueing system to describe the increase and decrease of vertex degree, which drives the network evolution. In our mechanism, the degree increase rate is regarded as a function positively correlated to the degree of a vertex, ensuring the preferential attachment in a new way. Degree distributions are investigated under two expressions of the degree increase rate, one of which manifests a ``long tail'', and another one varies with different values of parameters. In simulations, we compare our theoretical distributions with simulation results and also apply them to real networks, which presents the validity and applicability of our model.
\end{abstract}

\begin{IEEEkeywords}
Evolving Network, Network Modeling, Degree Distribution, Queueing System
\end{IEEEkeywords}

\IEEEpeerreviewmaketitle

\section{INTRODUCTION}
A great number of real-life systems have a structure composed of links and elements changing over time, e.g. the Internet, biological organisms, and populations. Complex networks have been playing a significant role in describing real complex systems, revealing the essential properties of these systems. As more and more dynamic networks emerge, modeling of evolving networks and analyses for their topological properties are crucial, which lay the basis for studying network dynamics.

Pioneering studies have given important insights into evolving network models and analyses for topological properties. There are some dynamic network models, e.g., temporal networks where vertices are not constantly connected, adaptive networks where vertices break their connections to neighbors autonomously according to epidemic spreading, and some evolving networks considering node addition and deletion. However, evolving network modeling is still worth studying to construct a more realistic network and also analyze its critical properties. Temporal network models and adaptive network models are not available to describe those networks including the growth and decrease of vertices, and current evolving network models which consider the growth and decrease of vertices are not in favor of investigating the important topological property, the degree distribution. To address these issues, in this paper, we construct evolving network driven by the variance of degree, endowed with dynamics and randomness based on stochastic process and queueing theory. The realization of preferential attachment is not expressed as a probability in the BA model, instead, we propose a degree increase and decrease mechanism, where the increase rate is proportional to the present degree. The main contributions of this paper are as follows.

\begin{enumerate}
\item We establish a novel evolving network model with node growth and deletion driven by the increase and decrease of degrees. Our proposed network model provides a theoretical framework for describing self-evolutionary behaviors of complex systems featured with network structures in real life. Based on this model, we can analyze real-world networks' properties and better understand their evolution patterns.  
\item Our proposed network model provides a new perspective on calculating degree distributions of evolving networks. As a key topological property, degree distributions have been difficult to obtain in a form of analytical solutions, in particular, for those evolving networks with the growth and deletion of nodes. We adjust the traditional PA mechanism by regarding that the degree increase rate of a node is proportional to its degree. To this end, it is easier to obtain specific expressions for degree distributions of evolving networks.
\item Based on our model, we produce two new probability distributions that can describe part of real-world evolving networks in different fields well. There are a great majority of networks in real life whose degree distributions do not follow Power-law distributions, while our theoretically obtained distributions can better fit these real-world networks compared to Power-law distributions according to our simulations.
\end{enumerate}


The organization of the paper is as follows: In Section \ref{sec:I}, we review related works about evolving network models and methods of analyzing topological properties. In Section \ref{sec:II}, we display our evolving network model and perform analyses for degree distributions of the network. In Section \ref{sec:III}, simulations are carried out to demonstrate the validity of our models and apply our models to real-world networks. Conclusions and future work are presented in Section \ref{sec:V}.

\section{RELATED WORK}\label{sec:I}
Watts and Strogatz first proposed small-world networks \cite{WS}. Then, Barab\'{a}si and Albert established the scale-free network model (BA model) \cite{SF}, showing that the growth process and preferential attachment account for the Pareto principle. The two models caused a breakthrough, and a great variety of network models emerge. There are a number of networks changing with time in real life nowadays, and studies on dynamic network models gained more attention. Evolving networks with adding or deleting a vertex by probabilities were constructed \cite{random}. As an extension, an evolution mechanism of weighted edges was taken into consideration \cite{weight}. Under continuous time, stochastic processes were utilized for evolving network modeling. Poisson point processes on space-time were applied to describe interactions between two spaces \cite{bd}, and birth-and-death processes were used for studying a finite tree \cite{tree}. Later, a queueing system was utilized to describe the growth and deletion of vertices of evolving networks \cite{feng1}. Based on this model, different deletion mechanisms were further investigated \cite{feng2}. With big data technology, a generative network framework was established, which preserves both the structural and temporal features of the real data \cite{data}. Dynamic network models with newly-added and deleted nodes, which preserve both heterogeneous and dynamic features of a network were also studied in the field of representation learning method \cite{learning} and mobile ad hoc networks \cite{hoc}. There are other dynamic network models, e.g., temporal networks \cite{tem} and its extension \cite{tem_based}, adaptive networks \cite{ad}, and simplicial activity driven networks (SAD) \cite{sad}. All of these models provide appropriate frameworks for describing dynamic networks.

Network topological properties are also worth studying. There are a lot of traditional methods of calculating degree distributions of networks, e.g., mean-field method and master equations, which are applied under some specific situations to derive degree distributions of evolving networks \cite{dd2} \cite{feng3}. However, in this case of evolving networks with growth and deletion of nodes, it is difficult to obtain an analytical solution due to the non-homogeneity. To address this, a Markov chain method based on stochastic process rules was established and applied to two kinds of evolving networks \cite{zhang}. Later, a reversible Markov chain formulation was proposed to obtain the stationary distribution of degrees \cite{Cihan}. By the Markov chain, an exact formula of degree distributions was obtained in a random evolving network \cite{mar_exa}. Apart from degree distributions, many other topological properties were also studied by researchers, e.g., betweenness \cite{between}, clustering coefficient \cite{cc}, and assortativity \cite{assor} of networks.

The evolving network model proposed in this paper applies stochastic processes which were used in some of the above primer work, however, we do not use the Poisson process to describe the growth and deletion of nodes as described in \cite{feng1}, while we focus on the degree of a vertex and utilize stochastic processes to describe the variance of degree. Additionally, different from the BA model, we use a non-homogenous Poisson process whose parameter is positively correlated to vertices' degree varying with time. In this way, the preferential attachment rule is not described by probabilities but is described from a new perspective by non-homogeneous Poisson rates which are related to the degree of vertices.

\section{\label{sec:II}EVOLVING NETWORKS WITH GROWTH AND REDUCTION}
There are a lot of real-world networks with dynamic features of shrinking as well as enlarging. For instance, in a population contact network, people can come to or leave the network, i.e., the migration of mobile population. Based on the network framework, inflows of the population can be described as the growth of nodes while outflows can be represented by the deletion of nodes represents. To generalize this dynamic feature of real-world networks, we construct an evolving network whose vertices and edges are deleted as well as added by our mechanism.
\subsection{\label{sec:II1}Modeling of Evolving network}
For the growth mechanism of our evolving network, there are newly-coming vertices connecting to existing vertices, and existing vertices with larger degrees have higher probabilities to be connected to new vertices according to the preferential attachment. Different from traditional models, the phenomenon of vertices connecting to new vertices can be regarded as an increasing process of vertex degrees.
Therefore, we assume that the degree of old vertices with larger degrees grows at a higher rate, while the degree of existing vertices with smaller degree values grows at a lower rate. Consequently, we realize the preferential attachment by a degree increase rate positively correlated with the degree.
On the other hand, for the reduction mechanism, there are vertices disconnecting from their neighbors and leaving the network, which occurs randomly. This phenomenon can be regarded as the decrease of degrees of an arbitrary vertex in the network. We thus assume that existing vertices in the network lose degrees randomly at a constant rate.

In respect of the increase process of a vertex's degree, it can also include new connections from old vertices in addition to connecting to newly-coming vertices. The topological properties can be different under the above situation. For the degree increase process only considering connections from new vertices, the network size grows as the degrees of vertices increase. For the degree increase process only considering the change of existing connections in the network, vertices automatically generate new edges to connect existing vertices, and the network will become denser while the network size would not change. Besides, the change of network average degree is different since the network size and the total degree increase together in the former model, while the total degree increases and the network size do not change in the latter one. The situation could be more complicated concerning both processes since there is not only the input of vertices but also connections changing among vertices. For simplicity, we next only discuss the network evolution driven by the newly-coming vertices.

Based on the above assumptions, the growth and reduction of the network can be regarded as the increase and decrease of degrees for each vertex in the network. In that sense, the increase and decrease of each vertex's degree can be described as a queueing system where vertices are customers, coming and leaving the network, and the process of connecting to the existing vertex is regarded as the service process.
The four important components of this queueing system are described in the following.

\textbf{Input process.}
For an arbitrary vertex $i$ in networks, new vertices come and connect to it randomly, and the rate of new vertices connecting to the vertex is proportional to its degree $k$. Hence, the input flow of new vertices follows a Poisson process with a parameter $\lambda(k)$ which is positively correlated to the degree of a vertex with $k$ degrees. We mark this as
\begin{equation}
\lambda(k)\sim k
\end{equation}
Noticeably, the input rate $\lambda(k)$ varies with the change of vertex $i$ degree. Moreover, suppose that the degree of vertex $i$ is $k(t)$ at time $t$, we denote
\begin{equation}
P(k(t+\Delta t)-k(t)=\Delta k)=\frac{(\lambda(k)\Delta t)^{k}}{k!}e^{-\lambda(k)\Delta t}
\end{equation}
as the probability that $\Delta k$ vertices come and connect to vertex $i$ during time $\Delta t$ with the input degree rate $\lambda(k)$

\textbf{Service process.}
Any vertex can be connected to as many vertices as possible, indicating that the number of vertices under service is without limit, numerically from $0$ to $+\infty$. Therefore, the number of servers is infinite, and all servers work in parallel. Besides, the decrease of vertices is random, leading to a Poisson output flow. Hence, the service time under each server is exponentially distributed with a parameter $\mu$ which is the output rate, expressed as
\begin{equation}
P\{T\leq t\}=1-e^{-\mu t},
\end{equation}
where the variable $T$ is the time that a vertex keeps connecting to vertex $i$ not being deleted, and $P\{T\leq t\}$ indicates the probability that the staying time $T$ in the network of a vertex is smaller than $t$.

\textbf{System capacity.}
Since there is no limit for the number of vertices in our proposed model, the vertex degree queueing system capacity is $+\infty$.

\textbf{Queueing discipline.}
Once a new vertex comes, it immediately connects to the vertex $i$. Hence, the waiting time is avoided and the system follows the rule that first comes, first serves.

The above components of the vertex degree queueing system present how vertices are connected by new vertices and how vertices are deleted. According to the input process, the larger the degree is, the higher the rate the vertex obtains new degrees, which guarantees ``the rich gets richer'' rule. For a better illustration, Fig. \ref{fig:constr} presents a vertex degree queueing system.

The increase and decrease of each vertex's degree drive the network evolution, resulting in the growth and reduction of the whole network.
We also briefly demonstrate the evolution of the network as follows.
\noindent\rule[0.15\baselineskip]{8.8cm}{0.5pt}
$\textbf{Evolving Network Modeling}$\\
\noindent\rule[0.05\baselineskip]{8.8cm}{0.5pt}
\begin{figure}[t]
\includegraphics[scale=0.718]{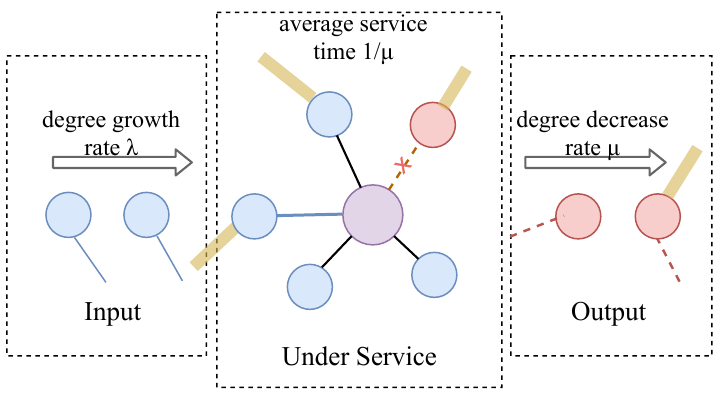}
\centering
\caption{An illustration of the degree queueing system. A snapshot during a short time $t$ is displayed (The degree queueing system evolves with time). The purple larger circle indicates vertex $i$ which is the objective of this system. Blue circles indicate vertices that are connecting or are going to connect to vertex $i$, and those with blue lines in the middle square indicate edges connected during time $t$, while black lines indicate edges connected before time $t$. Red dot lines and red circles indicate edges and vertices disconnected to vertex $i$. Yellow wide lines indicate that there are edges connected to other vertices in the network, not denoting specific edges.}
\label{fig:constr}
\end{figure}

\textbf{\textit{Initial network}}
There are a few $m_{0}$ vertices in the initial network, connecting to each other randomly.

\textbf{\textit{Growth of network.}}
Based on the vertex degree queueing system, each existing vertex follows the Poisson process. Therefore, in light of the additive property of Poisson processes, the input flow of vertices of the whole network follows a Poisson process with a parameter that is the sum of Poisson parameters of all separate Poisson processes of the degree of each vertex.

\textbf{\textit{Reduction of network.}}
According to the vertex degree queueing system, the degree decrease of each vertex follows a Poisson process, which indicates that the edges between vertices are randomly disconnected. For a vertex, if there is no edge between it and other vertices, it is deleted from the network.

\textbf{\textit{Connection and disconnection.}}
For each vertex, new vertices with an edge connect to it. Besides, since any existing vertex (denoted as vertex $i$) with $k$ degrees in the network is connected to new vertices at the rate $\lambda(k)$, which is in a positive correlation with the degree of vertex $i$, the connection mechanism follows the preferential attachment that is
\begin{equation}
P_i=\frac{\lambda_i(k)}{\sum_i\lambda_i(k)}
\end{equation}
where $P_i$ is the probability that vertex $i$ is connected to a newly coming vertex.
For the disconnection, in terms of vertex $i$, vertices leave it once the service process ends, which leads to the disconnection of vertices in the network.

\textbf{\textit{Termination.}}
We set a time $T$ large enough for the termination. Once time reaches $T$, the network ends up evolving.

\noindent\rule[0.05\baselineskip]{9cm}{0.5pt}

The above description indicates a transition from the degree increase and decrease mechanism to the network growth and reduction.
The degree increasing of vertices makes the network grow by the newly coming vertices with edges, while the degree decreasing of vertices makes the network reduce by the removal of vertices and edges. Therefore, the network can be regarded as a multiserver multiqueue queueing system, where each vertex is a server and degrees are customers.
The input flow of this network queueing system includes non-homogeneous Poisson processes with rate $\lambda_i$ of server $i$, and the output flow with $\mu$ is a Poisson process which is the addition of the output of all servers satisfying $\mu=\sum_i^n\mu_i$, where $\mu_i$ is the output rate of server $i$. The stationary condition of our network queueing can be analyzed based on the method in \cite{nonhq}. The network queueing system has a unique stationery if the number of customers $\{N(t),t\geq0\}$ is positive recurrent, following
$\pi diag(\lambda_1, \lambda_2,.., \lambda_n)e<\pi G_ne$, where $\pi$ is the stationary probability vector of $H$, $G_n$ is the matrix of output rate in Eq. 3 in \cite{nonhq}, and $e$ is a column vector with all its elements equal to 1.

From the perspective of the whole network, a newly-coming vertex may connect to more than one existing vertex, hence, it is a one-to-many service process in terms of the whole network queueing system. We suppose new vertices carry $m$ edges and connect to existing ones. Once the degree of an existing vertex in the network increases by 1, it suggests that a new vertex comes to the network and connects to existing vertices, then the total degree of the whole network increases by $2m$. Similarly, once the degree of an existing vertex decreases by 1, it suggests that a vertex which is its neighbor disconnects from the existing one, leading to the total degree of the network decreasing by 2. The total degree of the whole network can be taken as a stochastic process denoted by $\{D(t), t\geq 0\}$. Since the degree of a vertex $i$ increases at the rate $\lambda_i$, the increase process of the total degree is a compound non-homogenous Poisson flow with the parameter $2\sum_i\lambda_i m$ according to the Poisson additive property. In the same way, the decrease process of the total degree follows a Poisson process with the parameter $2\mu$.

\subsection{\label{sec:II1}Analyses for Degree Distribution}
In this section, we perform analyses for the degree distribution of the proposed evolving network.

According to the model above, the degree increase rate $\lambda(k)$ of a vertex with $k$ degrees is not fixed with time-varying, positively correlated with the degree of the vertex, which is denoted by $\lambda(k)\sim k$. We hereby discuss two exact expressions for $\lambda(k)$: $\lambda(k)=k$ and $\lambda(k)=ln(1+k)$. The two expressions for $\lambda$ are both in a positive correlation with vertex degrees, and $\lambda(k)=k$ is the simplest expression to describe ``the rich gets richer'' rule for the increase of degree. $\lambda(k)=ln(1+k)$ also ensures that vertices with larger degrees obtain degrees at a higher rate, which though grows more slowly when $k$ gets larger. Before deducing the degree distribution in these two situations, some assumptions are introduced.

We analyze the degree of an arbitrary vertex $i$ in the network, then the degree distribution of vertex $i$ we deduced is the degree distribution of the network.
The degrees of a vertex $i$ change with time, and the degree value can be taken as a stochastic process, denoted by $\{K(t),\ t\geq0\}$. Furthermore, we assume that the future state of the degree value $K(t+\Delta t)$, given the past states and the present state $K(t)$, is independent of the past states and depends only on the present state, which is intuitive according to the increase and decrease mechanism for the vertex degree. Thus, the degrees of an arbitrary vertex $i$ varying with time denoted by $\{K(t),\ t\geq0\}$ is known as a Markov chain whose state space is $\Omega=\{1,\ 2,\ \cdots\}$, and we let
\begin{equation}
P_k(t)=P\{K(t)=k\},\quad k\in \Omega,
\end{equation}
which indicates the probability that the degree is $k$ at time $t$. Given $t\rightarrow\infty$, if the limited distribution exists, then define it as the stationary distribution
\begin{equation}
P_k=\lim_{t\rightarrow\infty}P_k(t),\quad k\in \Omega.
\end{equation}
that is also the degree distribution of the network if the process of the degree will be stationary when $t\rightarrow\infty$.

Since the vertex degrees increase and decrease, the birth and death process is introduced to deduce expressions for degree distributions. An illustration of the state transition of the birth and death process for the vertex degree is shown in Fig. \ref{fig:bad}. Since the time of each new vertex keeps connecting following an exponential distribution with $\mu$ and there are $k$ vertices connected to a vertex at the same time. Hence, $k\mu$ indicates the total degree decrease rate. Next, we analyze two situations that are $\lambda(k)=k$ and $\lambda(k)=ln(1+k)$ respectively.

\subsubsection{Degree distribution with $\lambda(k)=k$}
We first consider the form $\lambda(k)=k$. Primarily, we give the condition of the existence of the stationary distribution.

\begin{Lemma}\label{lm:1}
Suppose that the degree increase rate $\lambda(k)=k$, the decrease rate is $\mu$ ($\mu$>0), then the stationary degree distribution exists if $\mu>1$.
\end{Lemma}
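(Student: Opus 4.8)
The plan is to treat the vertex degree $\{K(t),\,t\ge 0\}$ as a birth and death process on the state space $\Omega=\{1,\ 2,\ \cdots\}$ and to apply the classical positive-recurrence criterion for such chains. As established in the model, when $\lambda(k)=k$ the up-transition (birth) rate out of state $k$ is $\lambda(k)=k$, while the down-transition (death) rate out of state $k$ is $k\mu$, since each of the $k$ currently connected vertices departs independently at rate $\mu$. First I would write the cut balance equations linking consecutive states,
\begin{equation}
k\,P_k=(k+1)\mu\,P_{k+1},\qquad k\ge 1,
\end{equation}
which, for a birth and death process, are equivalent to the full global balance equations; an irreducible such chain admits a stationary distribution exactly when the resulting potential coefficients are summable.

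Next I would solve this recursion explicitly. Writing $P_{k+1}=\frac{k}{(k+1)\mu}\,P_k$ and iterating from $P_1$, the product telescopes because $\prod_{j=1}^{k-1}\frac{j}{j+1}=\frac{1}{k}$, yielding the closed form
\begin{equation}
P_k=P_1\prod_{j=1}^{k-1}\frac{j}{(j+1)\mu}=\frac{P_1}{k\,\mu^{k-1}},\qquad k\ge 1.
\end{equation}
The stationary distribution exists precisely when these weights are summable, so that $P_1$ can be fixed by normalization. I would therefore reduce the whole question to the convergence of $\sum_{k=1}^{\infty}\frac{1}{k\,\mu^{k-1}}$.

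Finally I would identify this series. Factoring out $\mu$ and setting $x=1/\mu$ gives $\mu\sum_{k=1}^{\infty}\frac{x^{k}}{k}$, which is a multiple of the logarithmic (Mercator) series $-\ln(1-1/\mu)$. This converges if and only if $1/\mu<1$, i.e. $\mu>1$; at $\mu=1$ it degenerates into the harmonic series and diverges, while for $\mu<1$ the terms do not even tend to zero. Hence the stationary distribution exists exactly when $\mu>1$, as claimed. The main obstacle is not any single computation but the careful bookkeeping of the \emph{state-dependent} death rate $k\mu$ (rather than a constant $\mu$) in the balance equations; it is precisely this $k$-factor, together with the telescoping of the birth rates, that produces the sharp threshold $\mu>1$ instead of a weaker or spurious condition.
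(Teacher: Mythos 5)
Your proposal is correct and follows essentially the same route as the paper: write the balance equations of the birth--death chain with birth rate $k$ and death rate $k\mu$, telescope the recursion to $P_k = P_1/(k\mu^{k-1})$, and reduce existence of the stationary distribution to summability of $\sum_{k\ge 1} \frac{1}{k\mu^{k-1}}$. The only (minor) difference is that you settle convergence by identifying the series with the Mercator series $-\mu\ln(1-1/\mu)$ --- which the paper defers to its Lemma~2 and Theorem~1 --- whereas the paper's proof of this lemma uses the ratio test; your version is, if anything, slightly cleaner since it also yields the explicit normalizing constant and the full if-and-only-if statement in one pass.
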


\begin{proof}
As the queueing system is stationary, according to the Kolmogorov backward equation, for $k$=1, 2, $\cdots$, we have
\begin{equation}\label{eq:k}
(k-1)P_{k-1}-k\mu P_k=0
\end{equation}
Processing Eq. \ref{eq:k}, put $P_{k-1}$ and $P_k$ on both sides of the equation sign, and iterate each equation based on the former equation in Eq. \ref{eq:k}, we obtain $P_k$ ($k=1, 2, \cdots$) expressed by $P_1$,
\begin{equation}\label{eq:k2}
P_k=\frac{1}{k\mu^{k-1}}P_1
\end{equation}
According to the sum of probabilities being 1, sum up the probabilities of all the possible values of $k$, denoted by $\sum_{k=1}^{+\infty}P_k=1$.
In the light of Eq. \ref{eq:k2}, it becomes
\begin{equation}\label{eq:p1}
P_1(1+\sum_{k=2}^{+\infty}\frac{1}{k\mu^{k-1}})=1
\end{equation}
Let $S(\mu)=\sum_{k=2}^{+\infty}\frac{1}{k\mu^{k-1}}$, and denote by $u_k$ the term $k$ in the series $S(\mu)$.
If $S(\mu)$ converges, then the expression for $P_k$ exists according to Eqs. \ref{eq:k2} and \ref{eq:p1}.

For $S(\mu)$, if $\lim_{k\rightarrow+\infty}\frac{u_k}{u_{k-1}}$<1, $S(\mu)$ converges. We have
\begin{equation}
\lim_{k\rightarrow+\infty}\frac{u_k}{u_{k-1}}=\lim_{k\rightarrow+\infty}\frac{1}{k{\mu^{k-1}}}\cdot(k-1){\mu^{k-2}}=\lim_{k\rightarrow+\infty}\frac{k-1}{k}\cdot\frac{1}{\mu}
\end{equation}
Since $\lim_{k\rightarrow+\infty}\frac{k-1}{k}=1$,
we obtain
\begin{equation}\label{eq:f}
\lim_{k\rightarrow+\infty}\frac{u_k}{u_{k-1}}=\frac{1}{\mu}
\end{equation}
According to Eq. \ref{eq:f}, if $\mu$>1, $u_k$ is larger than $u_{k-1}$, $S(\mu)$ is divergent.
If $\mu$=1, the general term is
\begin{equation}
u_k=\frac{1}{k}
\end{equation}
and $S(\mu)=\sum_{k=2}^{+\infty}\frac{1}{k}$ is a harmonic series which is divergent leading to no solution to $P_1$ in light of Eq. \ref{eq:p1}.
If $\mu>1$, $S(\mu)$ converges, further the expression for the stationary degree distribution $P_{k}$ exists according to Eq. \ref{eq:p1}.

Therefore, the stationary degree distribution $P_k$ exists if $\mu>1$.
The result follows.
\end{proof}

\begin{figure}[t]
\includegraphics[scale=0.718]{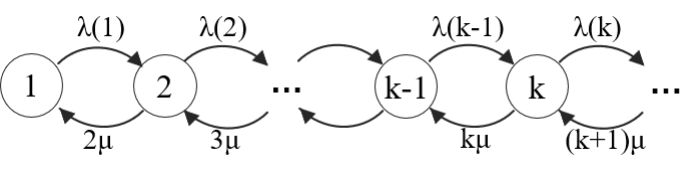}
\centering
\caption{An illustration of the state transition of a vertex (denoted by $i$)'s degree: As we can see above the arrow $f(k)$$(k=1,2,...)$ indicates the degree increase rate relevant to the degree $k$, and beneath the arrow, $k\mu$ indicates the degree decrease rate. The figures in circles indicate the degree value.}
\label{fig:bad}
\end{figure}
Lem. \ref{lm:1} demonstrates the condition for the existence of the stationary distribution for the vertex degree. We know that if $\lambda(k)=k$, $\mu$ must be larger than 1 for the theoretical solution for stationary degree distribution.
Under the condition that $\mu>1$, we next deduce the expression for $P_k$. Before deducing the expression for $P_k$, we introduce Lemma \ref{lm:2} for preparations.
\begin{Lemma}\label{lm:2}
Giving the infinite series $S(x)=\sum_{k=1}^{+\infty}\frac{x^{k-1}}{k}$, and $x<1$, $S(x)$ converges and
\begin{equation}
S(x)=-\frac{ln(1-x)}{x}
\end{equation}
\end{Lemma}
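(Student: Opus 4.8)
The plan is to recognize $S(x)$ as a rescaled version of the standard logarithmic (Mercator) power series and to recover its closed form by term-by-term differentiation of the geometric series. First I would settle convergence: applying the ratio test to the general term $u_k = x^{k-1}/k$ gives $\lim_{k\rightarrow+\infty}\left|u_{k+1}/u_k\right| = \lim_{k\rightarrow+\infty}\frac{k}{k+1}|x| = |x|$, so for $|x|<1$ the series converges absolutely. This in particular covers the range relevant to Lemma \ref{lm:1}, where one sets $x=1/\mu$ with $\mu>1$, so that $0<x<1$. This also fixes the radius of convergence as $1$ and licenses free manipulation of the series on compact subintervals.

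Next I would introduce the auxiliary function $g(x)=xS(x)=\sum_{k=1}^{+\infty}\frac{x^{k}}{k}$, which has the same radius of convergence. The key step is to differentiate $g$ term by term; since a power series may be differentiated within its interval of convergence, I obtain
\begin{equation}
g'(x)=\sum_{k=1}^{+\infty}x^{k-1}=\frac{1}{1-x},
\end{equation}
where the last equality is the sum of the geometric series for $|x|<1$. Integrating from $0$ to $x$ and using $g(0)=0$ then yields $g(x)=-\ln(1-x)$.

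Finally, dividing by $x$ (valid for $x\neq 0$) returns the claimed identity
\begin{equation}
S(x)=\frac{g(x)}{x}=-\frac{\ln(1-x)}{x},
\end{equation}
with the removable singularity at $x=0$ handled by $S(0)=1$, which is consistent with the leading term of the series.

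The main obstacle, and really the only nontrivial point, is justifying the term-by-term differentiation and integration. I would lean on the standard theorem that a power series is differentiable term by term on the open interval of convergence and that an antiderivative is obtained by integrating term by term there; both follow from the uniform convergence of the series (and of its formal derivative) on every closed subinterval $[-r,r]$ with $r<1$. Everything else reduces to routine summation of convergent series, so I expect the bulk of the write-up to be the convergence bookkeeping rather than the algebra.
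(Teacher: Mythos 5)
Your proposal is correct and follows essentially the same route as the paper's proof: both factor out $1/x$, define the auxiliary series $\sum_{k\geq 1} x^k/k$ (your $g(x)$, the paper's $S_1(x)$), differentiate it term by term to recover the geometric series $\frac{1}{1-x}$, integrate from $0$ to $x$ to obtain $-\ln(1-x)$, and divide by $x$. Your write-up is somewhat more careful than the paper's — you justify convergence via the ratio test and the term-by-term operations via uniform convergence on compact subintervals, points the paper glosses over — but the underlying argument is identical.
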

\begin{Proof}
$S(x)$ can be processed to
\begin{equation}\label{eq:s}
S(x)=\frac{1}{x}\sum_{k=1}^{+\infty}\frac{x^k}{k}
\end{equation}
where $x$ is regarded as a constant, not a variable.
Let $\sum_{k=1}^{+\infty}\frac{x^k}{k}$ be $S_1(x)$,
Take the derivative of $S_1(x)$ with respect to $k$, marked as $S^{\prime}_1(x)$, based on the algorithm of power series, we yield
\begin{equation}\label{eq:xn}
S^{\prime}_1(x)=\sum_{k=1}^{+\infty}x^{k-1}=\frac{1}{1-x}
\end{equation}
which utilizes the summation formula of geometric series.

Then, integrate $S^{\prime}_1(x)$, we obtain
\begin{equation}\label{eq:s1}
S_1(x)=\int_{0}^{x}\frac{1}{1-x}dx=-ln(1-x)
\end{equation}
According to Eq. \ref{eq:s} and \ref{eq:s1}, $S(x)$ is
\begin{equation}
S(x)=-\frac{ln(1-x)}{x}
\end{equation}
The result follows.
\end{Proof}

Based on Lemmas. \ref{lm:1} and \ref{lm:2}, the exact expression for the degree stationary distribution $P_k$ is presented below.

\begin{Theorem}\label{th:1}
For an arbitrary vertex $i$ in the proposed network, the degree increase rate $\lambda(k)$ of it at time $t$ is equal to the degree $k(t)$ of vertex $i$, and the degree decrease rate is $\mu$ where $\mu<1$. Then the stationary degree distribution $P_k$ of the vertex $i$ exists and is denoted as
\begin{equation}
P_k=-\frac{1}{k\mu^{k}}ln^{-1}(1-\frac{1}{\mu})
\end{equation}
\end{Theorem}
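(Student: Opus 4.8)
The plan is to combine the recursion already obtained in Lemma \ref{lm:1} with the closed form of Lemma \ref{lm:2}, reducing the entire argument to the determination of the single normalizing constant $P_1$. I would take Eq. \ref{eq:k2}, namely $P_k=\frac{1}{k\mu^{k-1}}P_1$, as my starting point; it already writes every $P_k$ in terms of $P_1$, so only $P_1$ remains unknown.

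Next I would impose the normalization $\sum_{k=1}^{+\infty}P_k=1$. Substituting the recursion turns this into $P_1\sum_{k=1}^{+\infty}\frac{1}{k\mu^{k-1}}=1$. The crucial observation is that this series is precisely $S(x)$ of Lemma \ref{lm:2} evaluated at $x=\frac{1}{\mu}$, since $\sum_{k=1}^{+\infty}\frac{1}{k\mu^{k-1}}=\sum_{k=1}^{+\infty}\frac{(1/\mu)^{k-1}}{k}=S\!\left(\frac{1}{\mu}\right)$. Because the stationary distribution lives in the regime where $\frac{1}{\mu}<1$, Lemma \ref{lm:2} applies and gives $S\!\left(\frac{1}{\mu}\right)=-\mu\ln\!\left(1-\frac{1}{\mu}\right)$.

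The remainder is routine algebra: solving for the constant yields $P_1=-\frac{1}{\mu}\ln^{-1}\!\left(1-\frac{1}{\mu}\right)$, and substituting back into Eq. \ref{eq:k2} gives
\[
P_k=\frac{1}{k\mu^{k-1}}\left(-\frac{1}{\mu}\ln^{-1}\!\left(1-\frac{1}{\mu}\right)\right)=-\frac{1}{k\mu^{k}}\ln^{-1}\!\left(1-\frac{1}{\mu}\right),
\]
which is exactly the claimed expression.

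I do not expect a genuine obstacle, since all the analytic difficulty has been front-loaded into the two preceding lemmas; the work here is purely bookkeeping. The two points demanding care are, first, matching the normalization series to $S(x)$ through the substitution $x=\frac{1}{\mu}$ while correctly tracking the lone factor of $\mu$ that appears because $S(x)$ carries $x^{k-1}$ rather than $x^{k}$, and second, aligning the hypotheses: the convergence condition $x<1$ of Lemma \ref{lm:2} is the same as the existence condition of Lemma \ref{lm:1}. Indeed, for $\ln\!\left(1-\frac{1}{\mu}\right)$ to be real and negative, so that each $P_k$ is a genuine positive probability, one needs $\mu>1$; I would therefore read the stated hypothesis as $\mu>1$ rather than $\mu<1$.
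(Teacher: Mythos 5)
Your proof is correct and takes essentially the same route as the paper's: iterate the recursion $P_k=\frac{1}{k\mu^{k-1}}P_1$ from Lemma \ref{lm:1}, impose normalization, and evaluate the resulting series via Lemma \ref{lm:2} at $x=\frac{1}{\mu}$ to obtain $P_1=-\frac{1}{\mu}\ln^{-1}\left(1-\frac{1}{\mu}\right)$ and hence $P_k$. Your reading of the hypothesis as $\mu>1$ is also the right call: the stated condition $\mu<1$ is a typo, inconsistent with Lemma \ref{lm:1} and with $\ln\left(1-\frac{1}{\mu}\right)$ being real and negative, and your version of the intermediate step is in fact cleaner than the paper's, which misstates the value of $\sum_{k=1}^{+\infty}\frac{1}{k\mu^{k-1}}$ (writing the reciprocal $-\frac{1}{\mu\ln(1-1/\mu)}$ where $-\mu\ln\left(1-\frac{1}{\mu}\right)$ is meant) before arriving at the same correct $P_1$.
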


\begin{Proof}
Referring to Eq. \ref{eq:p1} in Lemma \ref{lm:1}, we obtain $P_1$ expressed as
\begin{equation}\label{eq:pp1}
P_1=\frac{1}{(1+\sum_{k=2}^{+\infty}\frac{1}{k\mu^{k-1}})}
\end{equation}
According to Lem. \ref{lm:2}, the denominator $\sum_{k=2}^{+\infty}\frac{1}{k\mu^{k-1}}$ of the right term of Eq. \ref{eq:pp1} can be written as
\begin{equation}\label{eq:k22}
\sum_{k=2}^{+\infty}\frac{1}{k\mu^{k-1}}=\sum_{k=1}^{+\infty}\frac{x^{k-1}}{k}-u_1=\sum_{k=1}^{+\infty}\frac{x^{k-1}}{k}-1
\end{equation}
where $u_1$ is the first term of the series $S(\mu)$ in Lem. \ref{lm:1}. Taking Eq. \ref{eq:k22} into Eq. \ref{eq:pp1}, then
\begin{equation}
P_1=\frac{1}{\sum_{k=1}^{+\infty}\frac{1}{k\mu^{k-1}}}
\end{equation}
Applying Lemma. \ref{lm:2}, we get $\sum_{k=1}^{+\infty}\frac{1}{k\mu^{k-1}}=-\frac{1}{\mu ln(1-\frac{1}{\mu})}$.
Thus,
\begin{equation}
P_1=-\frac{1}{\mu ln^(1-\frac{1}{\mu})}
\end{equation}

Then, we obtained $P_k$ by iterations according to Eq. \ref{eq:k2},
\begin{equation}
P_k=-\frac{1}{k\mu^{k}}ln^{-1}(1-\frac{1}{\mu})
\end{equation}

The result follows.
\end{Proof}

\begin{Theorem}\label{th:2}
In our proposed model, assume that the degree increase rate of an arbitrary vertex $i$ is equal to its degree, denoted as $\lambda(k)=k$ of vertex $i$, and the degree decrease rate is $\mu$, where $\mu<1$. Then the expectation of the degree of vertex $i$ is
\begin{equation}\label{eq:ex}
E[k]=-\frac{1}{(\mu-1)ln(1-\frac{1}{\mu})}
\end{equation}
\end{Theorem}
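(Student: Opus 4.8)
The plan is to compute the expectation directly from its definition, using the closed form of the stationary degree distribution established in Theorem \ref{th:1}. First I would write
\begin{equation}
E[k]=\sum_{k=1}^{+\infty}kP_k,
\end{equation}
and substitute $P_k=-\frac{1}{k\mu^{k}}ln^{-1}(1-\frac{1}{\mu})$ from Theorem \ref{th:1}. The crucial observation is that the factor $k$ in the summand cancels the $1/k$ appearing in $P_k$, so after pulling the constant $-ln^{-1}(1-\frac{1}{\mu})$ outside the sum, what remains is the pure geometric series $\sum_{k=1}^{+\infty}\mu^{-k}$.

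Next I would evaluate this geometric series in closed form. Since Lemma \ref{lm:1} guarantees $\mu>1$, we have $0<\frac{1}{\mu}<1$, so the series converges absolutely and
\begin{equation}
\sum_{k=1}^{+\infty}\mu^{-k}=\frac{1/\mu}{1-1/\mu}=\frac{1}{\mu-1}.
\end{equation}
Multiplying by the constant prefactor then yields
\begin{equation}
E[k]=-\frac{1}{(\mu-1)ln(1-\frac{1}{\mu})},
\end{equation}
which is exactly Eq. \ref{eq:ex}.

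I do not expect any serious obstacle: the argument is a one-line substitution followed by summing a geometric series. The only points requiring care are, first, that the cancellation of $k$ against $1/k$ leaves a series that is elementary rather than logarithmic (contrast with Lemma \ref{lm:2}, which handled the $1/k$-weighted series arising in the normalization); and second, the convergence condition. Here the genuine requirement is $\mu>1$, so that $\frac{1}{\mu}<1$ and the geometric sum is finite, consistent with the existence hypothesis of Lemma \ref{lm:1}; under $\mu<1$ the series would diverge and the expectation would be infinite, so the convergence regime should be aligned with that of the underlying distribution.
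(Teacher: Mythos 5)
Your proof is correct, and it is essentially the argument the paper intends: the paper in fact omits any explicit proof of Theorem \ref{th:2} (only a commented-out second-moment computation survives in the source), and the derivation is exactly your one-liner --- substitute $P_k$ from Theorem \ref{th:1}, cancel the factor $k$ against the $1/k$ in $P_k$, and sum the geometric series $\sum_{k=1}^{+\infty}\mu^{-k}=\frac{1}{\mu-1}$. You are also right to flag the hypothesis: the stated condition $\mu<1$ is a typo (under it, $ln(1-\frac{1}{\mu})$ is not even real and the series diverges); the correct regime is $\mu>1$, consistent with Lemma \ref{lm:1}.
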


\subsubsection{Degree distribution with $\lambda(k)=ln(1+k)$}
We consider an alternative expression for the degree increase rate $\lambda(k)=ln(k+1)$.
We also first analyze the existence of the stationary distribution for $K(t)$ with $f(k)=ln(k+1)$ in Lemma. \ref{lm:3}.

\begin{Lemma}\label{lm:3}
Suppose that the degree increase rate $\lambda(k)$ is equal to $\ln(k+1)$, and the decrease rate is $\mu$ ($\mu$>0). Then the stationary degree distribution exists.
\end{Lemma}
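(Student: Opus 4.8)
The plan is to mirror the proof of Lemma \ref{lm:1} verbatim, replacing $\lambda(k)=k$ by $\lambda(k)=\ln(k+1)$ throughout, and then to observe that the normalizing series now converges for \emph{every} $\mu>0$, so that no lower bound on $\mu$ is required. First I would write the stationary balance equation for the birth-and-death chain of Fig. \ref{fig:bad}. The up-rate out of state $k-1$ is $\lambda(k-1)=\ln(k)$ and the down-rate out of state $k$ is $k\mu$, so the same reasoning that produced Eq. \ref{eq:k} gives, for $k=2,3,\dots$,
\begin{equation}
\ln(k)\,P_{k-1}-k\mu P_k=0 .
\end{equation}
Solving for $P_k$ and iterating down to $P_1$ yields the analogue of Eq. \ref{eq:k2},
\begin{equation}
P_k=P_1\prod_{j=2}^{k}\frac{\ln(j)}{j\mu},\qquad k\ge 2 .
\end{equation}

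As in Lemma \ref{lm:1}, existence of the stationary distribution is then equivalent to convergence of the normalizing series $\sum_{k\ge1}\rho_k$, where $\rho_1=1$ and $\rho_k=P_k/P_1$. If this series converges, then $P_1=\big(\sum_{k\ge1}\rho_k\big)^{-1}$ is a well-defined positive number and $\{P_k\}$ is a genuine probability distribution, so the chain $\{K(t)\}$ admits a stationary law.

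The decisive step is the convergence test, and here the logarithm makes the argument easier than in the $\lambda(k)=k$ case. Applying the ratio test to $\rho_k$, the consecutive ratio is
\begin{equation}
\lim_{k\to\infty}\frac{\rho_{k+1}}{\rho_k}=\lim_{k\to\infty}\frac{\ln(k+1)}{(k+1)\mu}=0<1 ,
\end{equation}
since $\ln(k+1)$ grows strictly slower than the linear term $(k+1)\mu$. Unlike Eq. \ref{eq:f}, whose limit was $1/\mu$ and therefore forced $\mu>1$, the limit here is $0$ independently of $\mu$. Hence $\sum_{k\ge1}\rho_k$ converges for every $\mu>0$, the normalizer is finite, and the stationary degree distribution exists unconditionally, which is exactly the claim.

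I do not expect a serious obstacle: the only points needing a little care are the boundary behaviour at $k=1$ (there is no transition below the minimal degree, so the balance relation starts at $k=2$ and the product formula is vacuously correct at $k=1$), and stating the equivalence between the existence of a stationary law and summability of $\rho_k$ that underlies the whole approach, both of which are handled exactly as in Lemma \ref{lm:1}. The genuinely new content is simply the observation that the sublinear growth of $\ln(k+1)$ drives the ratio to zero, removing any constraint on $\mu$.
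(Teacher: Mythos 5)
Your proof is correct and takes essentially the same route as the paper's: write the stationary balance equations, iterate to express $P_k$ in terms of $P_1$, and apply the ratio test to the normalizing series, observing that the consecutive ratio tends to $0$ because $\ln$ grows sublinearly, so convergence holds for every $\mu>0$. Incidentally, your balance equation $\ln(k)P_{k-1}=k\mu P_k$ (up-rate $\lambda(k-1)=\ln(k)$ out of state $k-1$) is the indexing that is actually consistent with the paper's own product formula $P_k=\frac{\prod_{j=1}^{k-1}\ln(1+j)}{k!\,\mu^{k-1}}P_1$, even though the paper's displayed equation writes $\ln(1+k)P_{k-1}$; this off-by-one discrepancy in the paper is immaterial to the convergence conclusion.
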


\begin{Proof}
As the queueing system is stationary, according to the Kolmogorov backward equation, we have
\begin{equation}\label{eq:lnk}
ln(1+k)P_{k-1}-k\mu P_k=0
\end{equation}
Put $P_{k-1}$ and $P_k$ on both sides of the equation sign, iterate each equation in Eq. \ref{eq:lnk} based on the former equation, we obtain $P_k$ ($k=1, 2, \cdots)$) expressed by $P_1$,
\begin{equation}\label{eq:lnk2}
P_k=\frac{\prod_{j=1}^{k-1}ln(1+j)}{k!\mu^{k-1}}P_1
\end{equation}
According to the sum of probabilities being 1, sum up the probabilities of all the possible values of $k$, denoted by $\sum_{k=1}^{+\infty}P_k$=1,
Based on Eq. \ref{eq:lnk2}, the sum of probabilities is
\begin{equation}\label{eq:lnp1}
P_1(1+\sum_{k=2}^{+\infty}\frac{\prod_{j=1}^{k-1}ln(1+j)}{k!\mu^{k-1}})=1
\end{equation}
Let $\sum_{k=2}^{+\infty}\frac{\prod_{j=1}^{k-1}ln(1+k)}{k!\mu^{k-1}}$ be the infinite series $S(\mu)$, and denote by $u_k$ the general term of the series $S(\mu)$.
If $\lim_{k\rightarrow+\infty}\frac{u_k}{u_{k-1}}$<1, $S(\mu)$ converges, and the expression for the degree distribution $P_k$ exists.

Test the convergence of $S(\mu)$ via $u_k$ divided by $u_{k-1}$,
\begin{equation}
\begin{aligned}
\lim_{k\rightarrow+\infty}\frac{u_k}{u_{k-1}}&=\lim_{k\rightarrow+\infty}\frac{\prod_{j=1}^{k}ln(1+j)}{k!{\mu^{k-1}}}\cdot\frac{(k-1)!\mu^{k-2}}{\prod_{j=1}^{k-1}ln(1+j)}\\
&=\lim_{k\rightarrow+\infty}\frac{ln(1+k)}{k}\cdot\frac{1}{\mu}
\end{aligned}
\end{equation}
From
\begin{equation}
\lim_{k\rightarrow+\infty}\frac{ln(1+k)}{k}=0
\end{equation}
it follows,
\begin{equation}\label{eq:lnf}
\lim_{k\rightarrow+\infty}\frac{u_k}{u_{k-1}}=0
\end{equation}
$S(\mu)$ converges according to Eq. \ref{eq:lnf}, and the stationary degree distribution exists.

The result follows.
\end{Proof}

Though the expression for the degree stationary distribution $P_k$ can be obtained if $S(\mu)$ in Lemma. \ref{lm:3} is known according to Eq. \ref{eq:lnp1}, the theoretical solution for $S(\mu)$ is difficult to attained.
Therefore, we let the series $S(\mu)$ be
\begin{equation}\label{eq:s}
S(\mu)=\sum_{k=2}^{+\infty}\frac{\prod_{j=1}^{k-1}ln(1+j)}{k!\mu^{k-1}}
\end{equation}
We calculate $S(\mu)$ by numerical simulations. The sum should be calculated from $k=2$ to $+\infty$ theoretically. However, in simulations, we set $k$ large enough, meanwhile the largest value of $k$ should avoid the underflow of items in $S(\mu)$. Then, set the value for $\mu$, by computer calculation we can obtain $S(\mu)$ numerically, and further get $P_1$ according to
\begin{equation}
P_1=\frac{1}{1+S(\mu)}
\end{equation}
Once obtaining $P_1$, we can get $P_k$ ($k=1,2,\cdots$) by iterations based on Eq. \ref{eq:lnk2}. The plots of the degree distribution with $\lambda(k)=ln(1+k)$ will be demonstrated in the simulation.

Above all, we construct the evolving network with growth and decrease of vertices and edges, and perform analyses for the degree distribution via the birth and death process. More properties of the proposed network are further presented in the next section.

\begin{figure*}[t]
\begin{minipage}[t]{0.33\linewidth}
\centering
 \includegraphics[scale=0.37]{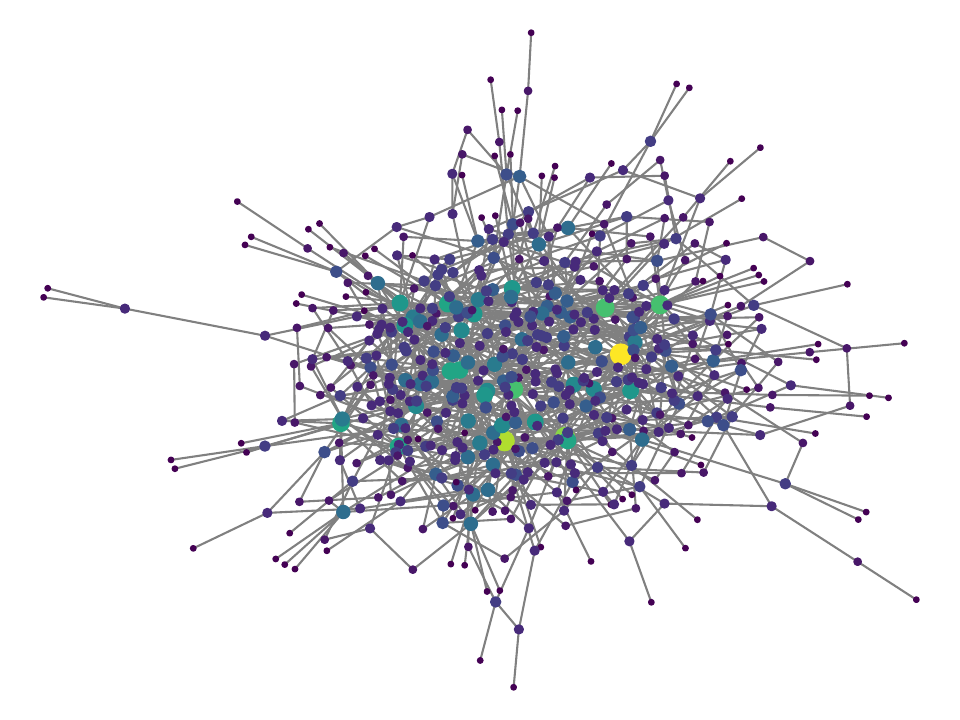}
\parbox{4.5cm}{\centering\footnotesize (a) Network with $\lambda(k)$=$k$ and $\mu$=1.01}

\end{minipage}
\begin{minipage}[t]{0.33\linewidth}
\centering
 \includegraphics[scale=0.46]{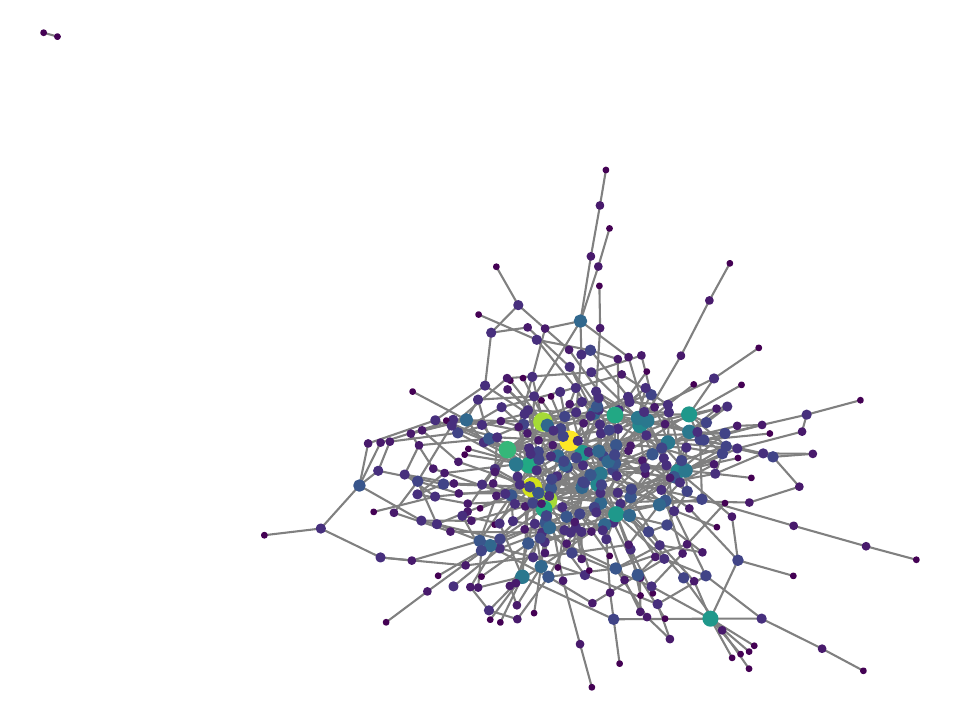}
\parbox{4.5cm}{\centering\footnotesize (b) Network with $\lambda(k)$=$k$ and $\mu$=1.21}

\end{minipage}
\begin{minipage}[t]{0.33\linewidth}
\centering
 \includegraphics[scale=0.48]{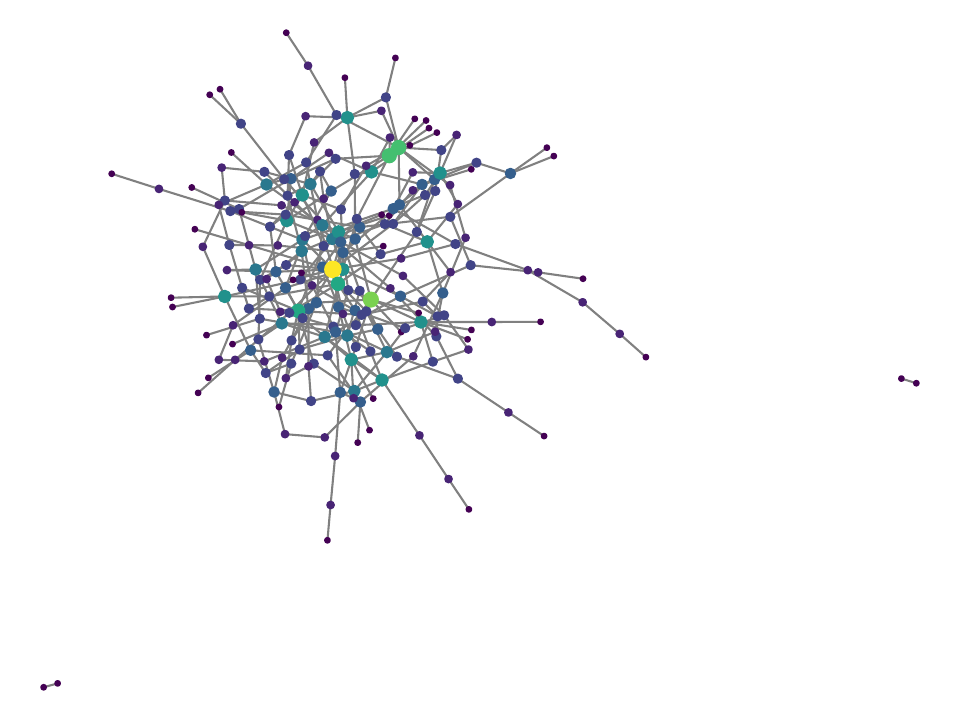}
\parbox{4.5cm}{\centering\footnotesize (c) Network with $\lambda(k)$=$k$ and $\mu$=1.41}

\end{minipage}
\begin{minipage}[t]{0.33\linewidth}
\centering
 \includegraphics[scale=0.35]{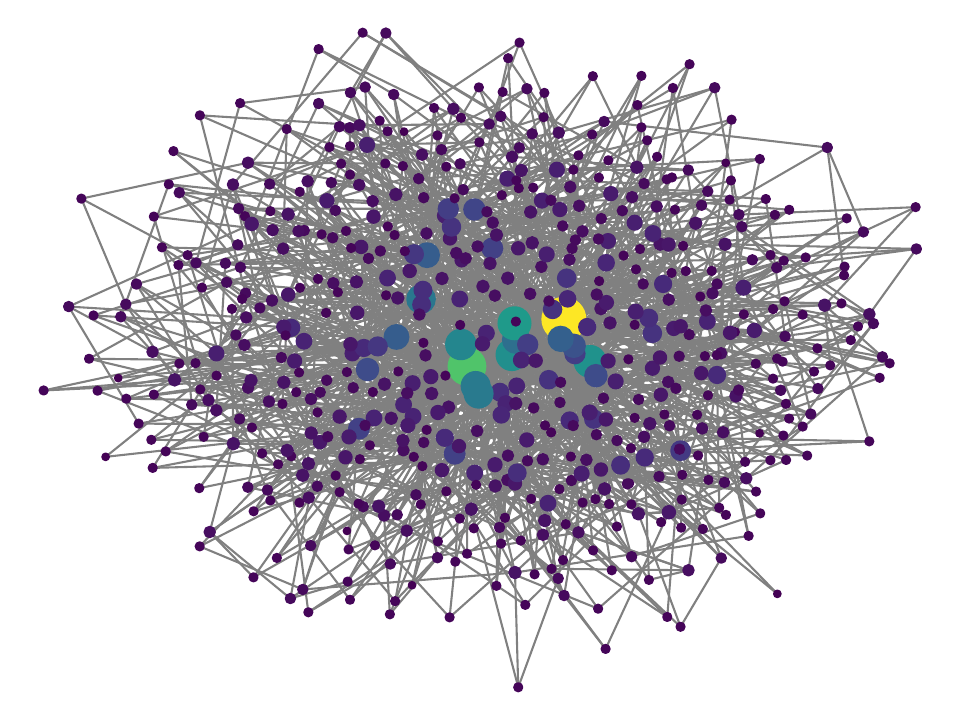}
\parbox{4.5cm}{\centering\footnotesize (d) Network with $\lambda(k)$=$ln(1+k)$ and $\mu$=0.005}

\end{minipage}
\begin{minipage}[t]{0.33\linewidth}
\centering
 \includegraphics[scale=0.35]{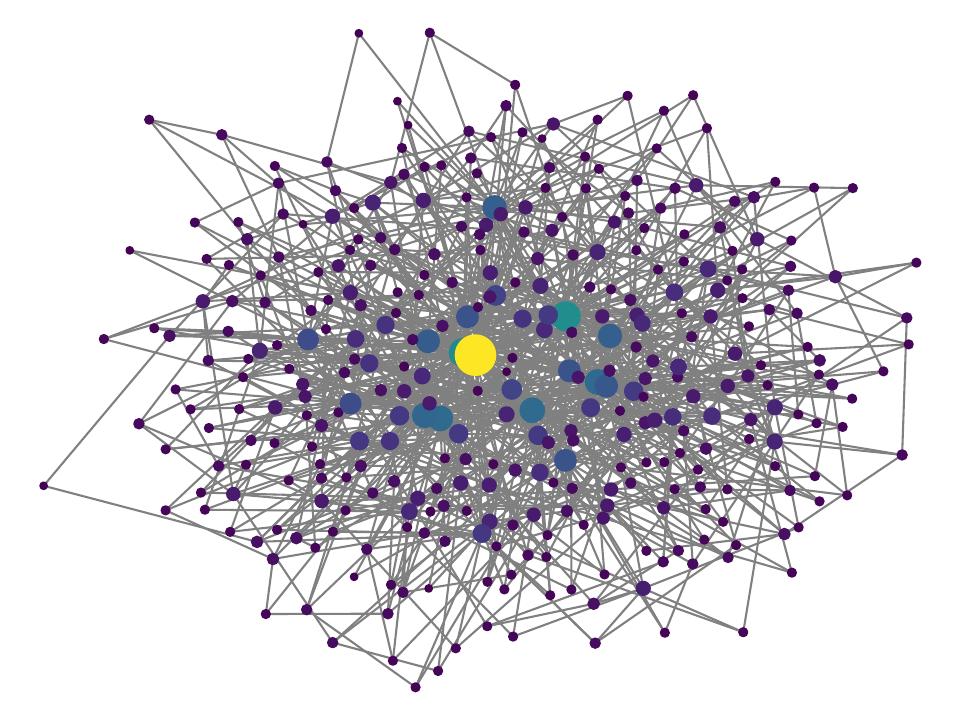}
\parbox{4.5cm}{\centering\footnotesize (e) Network with $\lambda(k)$=$ln(1+k)$ and $\mu$=0.01}

\end{minipage}
\begin{minipage}[t]{0.33\linewidth}
\centering
 \includegraphics[scale=0.35]{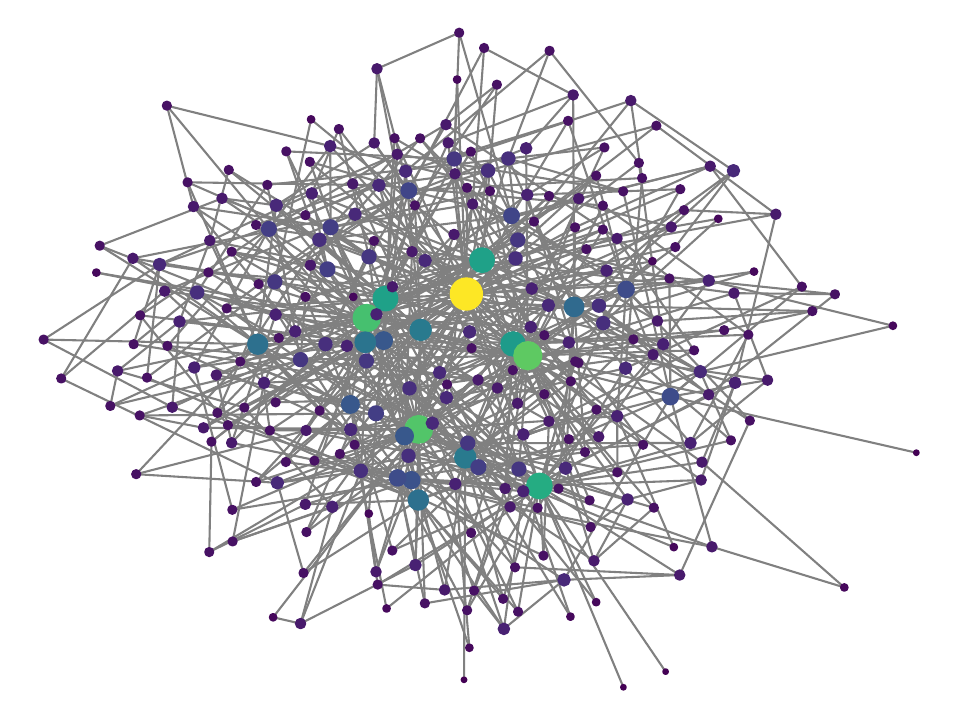}
\parbox{4.5cm}{\centering\footnotesize (f) Network with $\lambda(k)$=$ln(1+k)$ and $\mu$=0.05}

\end{minipage}
\caption{\label{fig:network}Illustration of networks under two degree increase rates and different decrease rates: The initial network is a scale-free network with 10 vertices. New vertices come with two edges connected to existing vertices. Sub-figures (a), (b), and (c) respectively show snapshots of networks with $\mu$=1.01, 1.21, and 1.41 under the first situation that $\lambda(k)=k$, while sub-figures (d), (e), and (f) respectively present snapshots of networks with $\mu$=0.005, 0.01, and 0.05 under the second situation that $\lambda(k)=ln(1+k)$. The size and the color of nodes symbolize their degree value, where big and yellow circles indicate a large degree while small and purple circles indicate a small degree value. Vertices with a medium degree value are middle-size and dark blue.}
\end{figure*}

\begin{figure*}[h!]
\begin{minipage}[t]{0.33\linewidth}
\centering
 \includegraphics[scale=0.37]{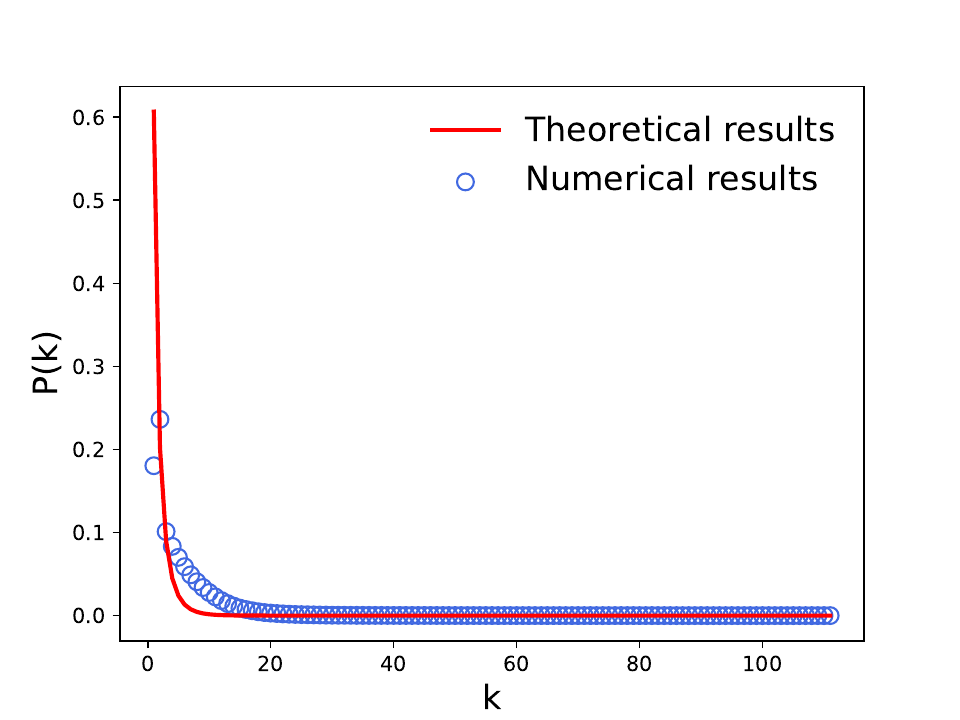}
\parbox{4.5cm}{\centering\footnotesize (a) The degree distribution with $\mu$=1.5}

\end{minipage}
\begin{minipage}[t]{0.33\linewidth}
\centering
 \includegraphics[scale=0.37]{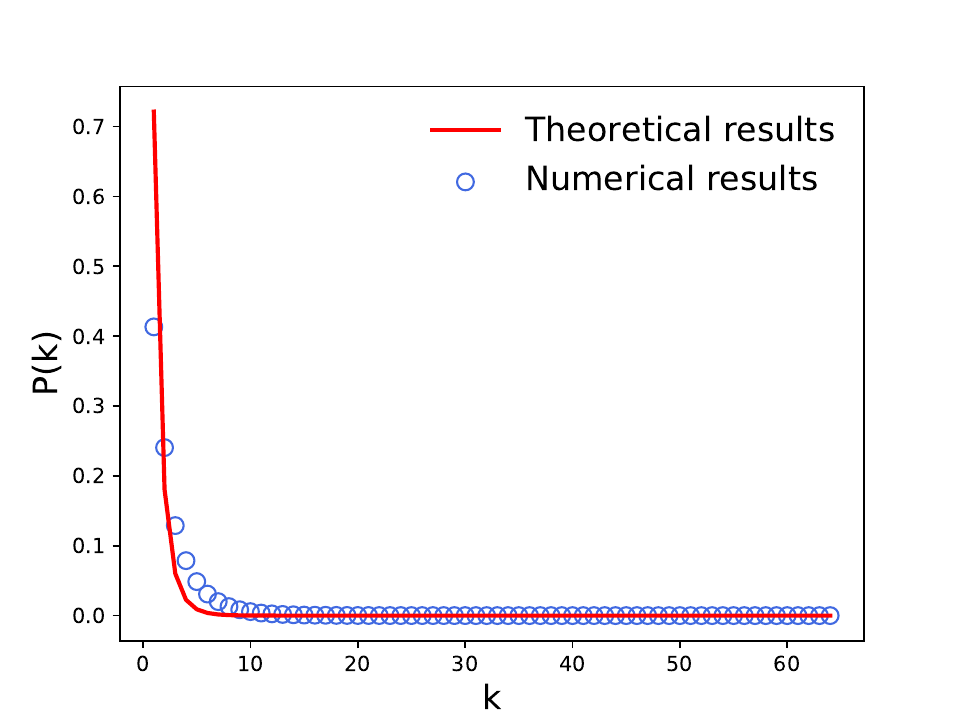}
\parbox{4.5cm}{\centering\footnotesize (b) The degree distribution with $\mu$=2.0}

\end{minipage}
\begin{minipage}[t]{0.33\linewidth}
\centering
 \includegraphics[scale=0.37]{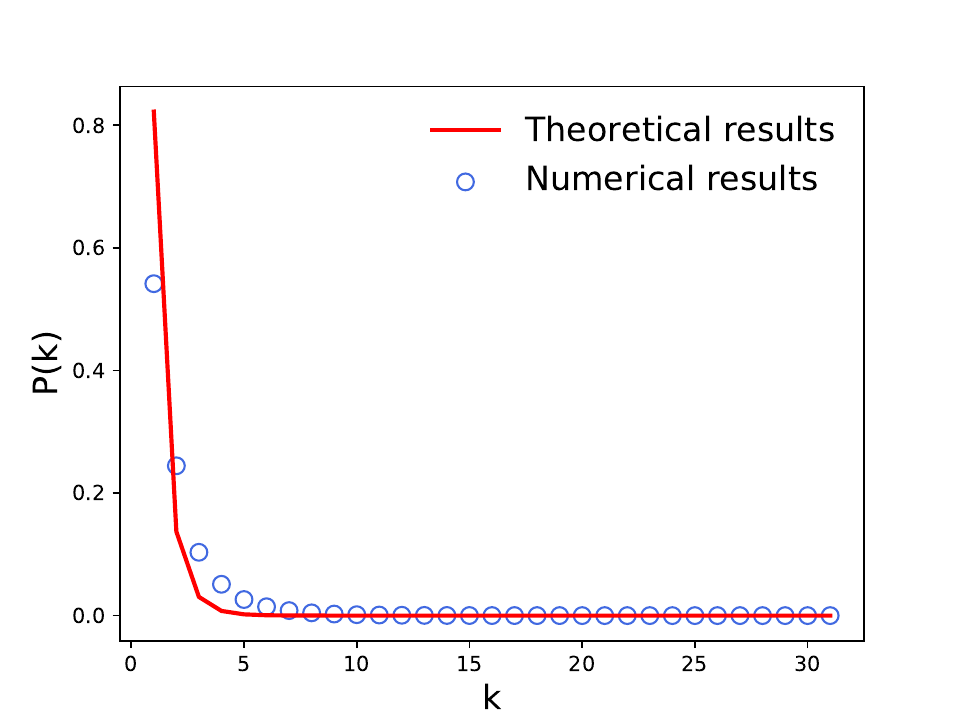}
\parbox{4.5cm}{\centering\footnotesize (c) The degree distribution with $\mu$=2.5}

\end{minipage}
\begin{minipage}[t]{0.33\linewidth}
\centering
 \includegraphics[scale=0.37]{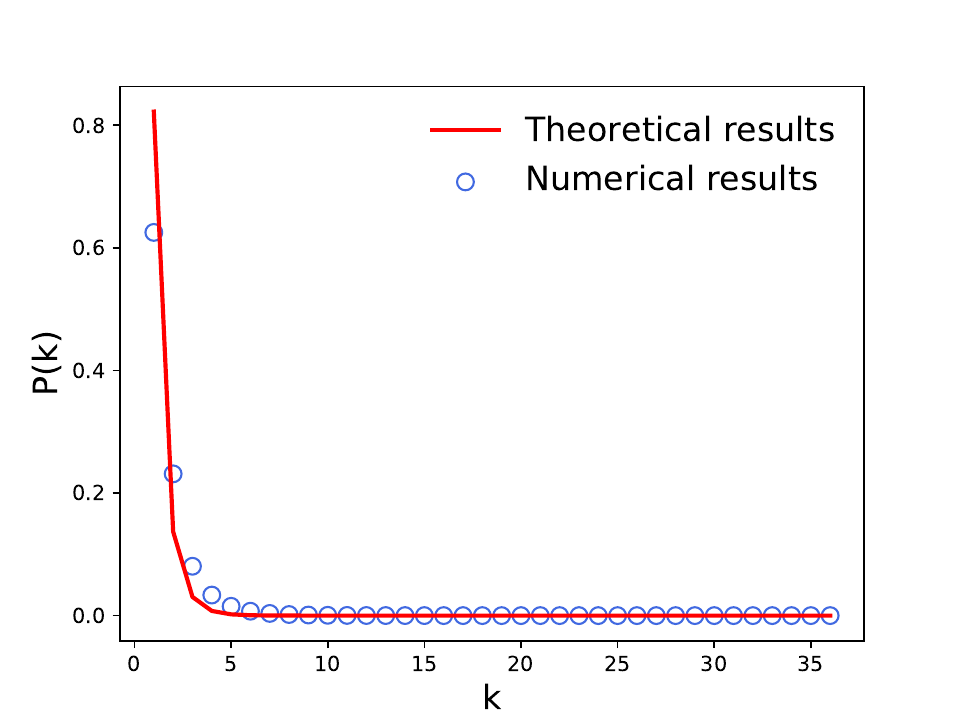}
\parbox{4.5cm}{\centering\footnotesize (d) The degree distribution with $\mu$=3.0}

\end{minipage}
\begin{minipage}[t]{0.33\linewidth}
\centering
 \includegraphics[scale=0.37]{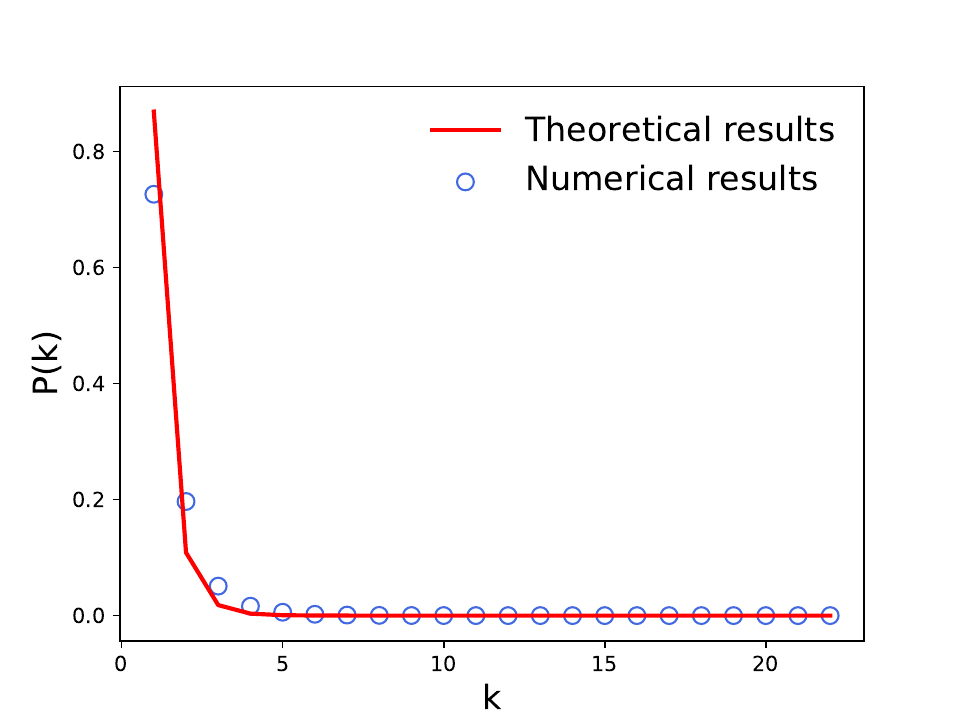}
\parbox{4.5cm}{\centering\footnotesize (e) The degree distribution with $\mu$=4.0}

\end{minipage}
\begin{minipage}[t]{0.33\linewidth}
\centering
 \includegraphics[scale=0.37]{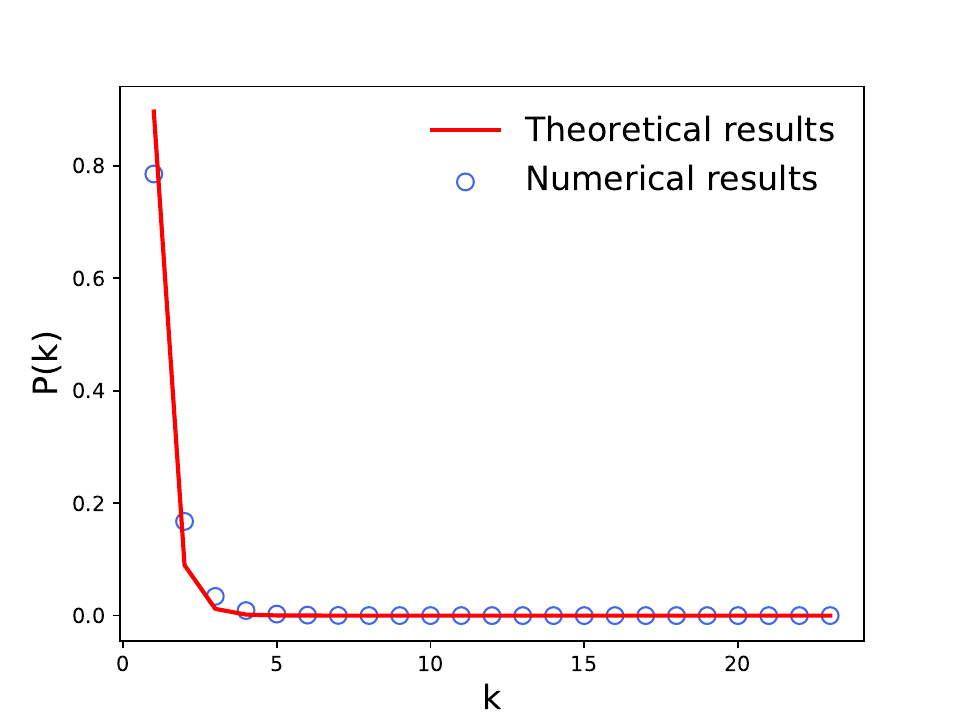}
\parbox{4.5cm}{\centering\footnotesize (f) The degree distribution with $\mu$=5.0}

\end{minipage}
\caption{\label{fig:dist1}The degree distribution with different decrease rates $\mu$ and $\lambda(k)$=$k$: Sub-figures (a)-(f) are respectively set with $\mu$=1.5, 2.0, 3.5, 3.0, 4.0, and 5.0. Time is set large enough for the distribution to be stationary. Red lines indicate theoretical distributions, and the blue circle plots indicate distributions obtained by simulations. Distributions are all heterogeneous, featured with ``long tail''.}
\end{figure*}

\section{\label{sec:III}Simulation}
Hereby, we display simulation results to verify and extend our theorems of the proposed model, and also apply our theorems to fitting real data, showing the feasibility of our model.

\begin{figure*}[t]
\begin{minipage}[t]{0.33\linewidth}
\centering
 \includegraphics[scale=0.37]{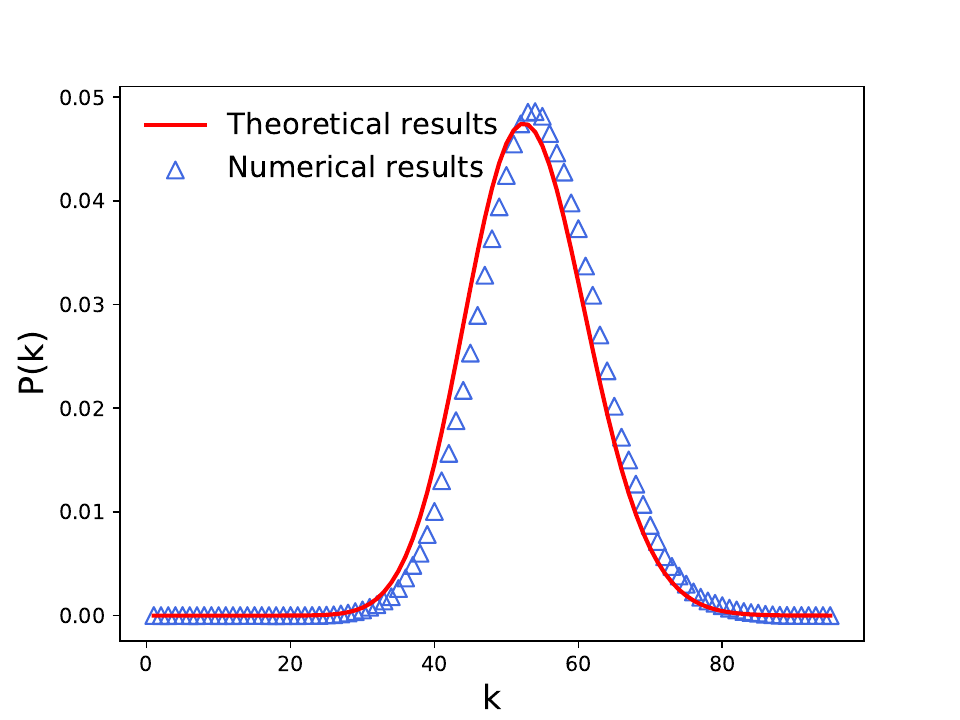}
\parbox{4.5cm}{\centering\footnotesize (a) The degree distribution with $\mu$=0.075}

\end{minipage}
\begin{minipage}[t]{0.33\linewidth}
\centering
 \includegraphics[scale=0.37]{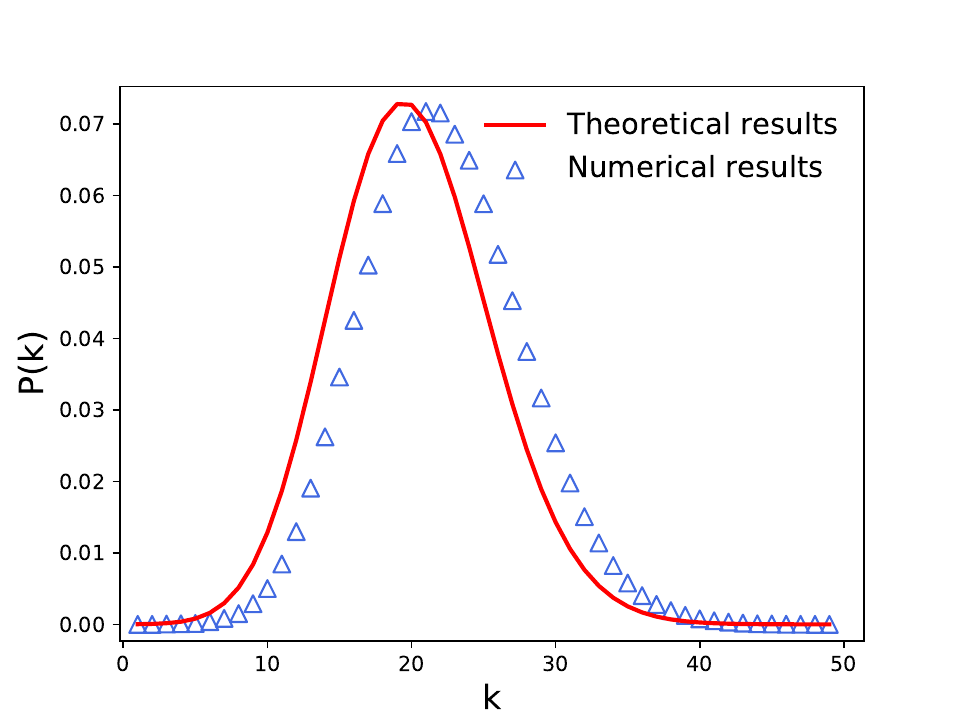}
\parbox{4.5cm}{\centering\footnotesize (b) The degree distribution with $\mu$=0.15}

\end{minipage}
\begin{minipage}[t]{0.33\linewidth}
\centering
 \includegraphics[scale=0.37]{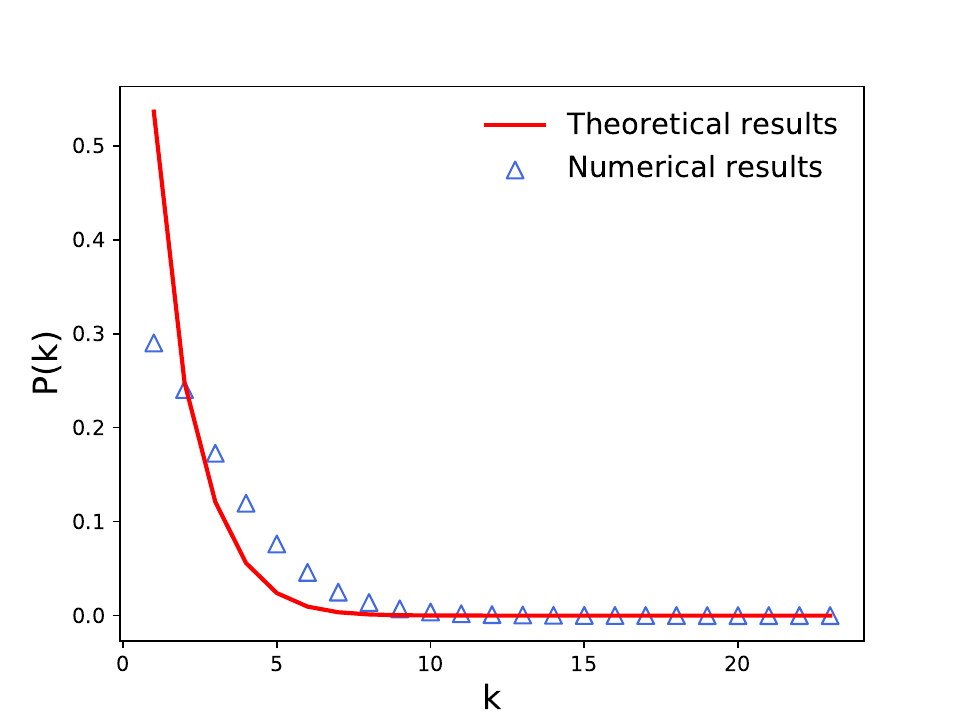}
\parbox{4.5cm}{\centering\footnotesize (c) The degree distribution with $\mu$=0.75}

\end{minipage}
\begin{minipage}[t]{0.33\linewidth}
\centering
 \includegraphics[scale=0.37]{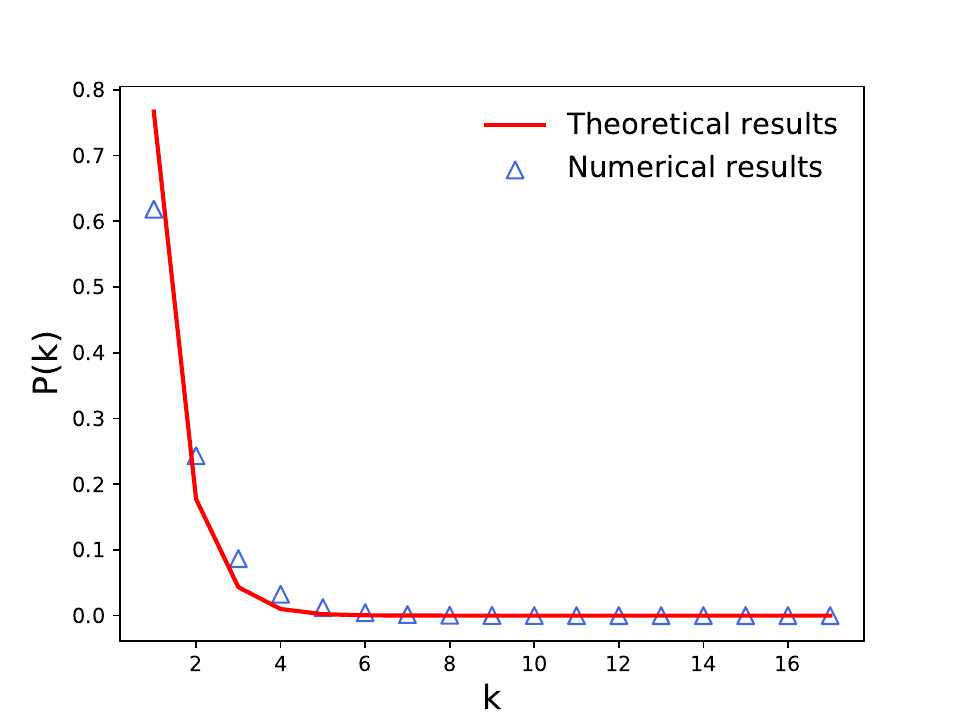}
\parbox{4.5cm}{\centering\footnotesize (d) The degree distribution with $\mu$=1.5}

\end{minipage}
\begin{minipage}[t]{0.33\linewidth}
\centering
 \includegraphics[scale=0.37]{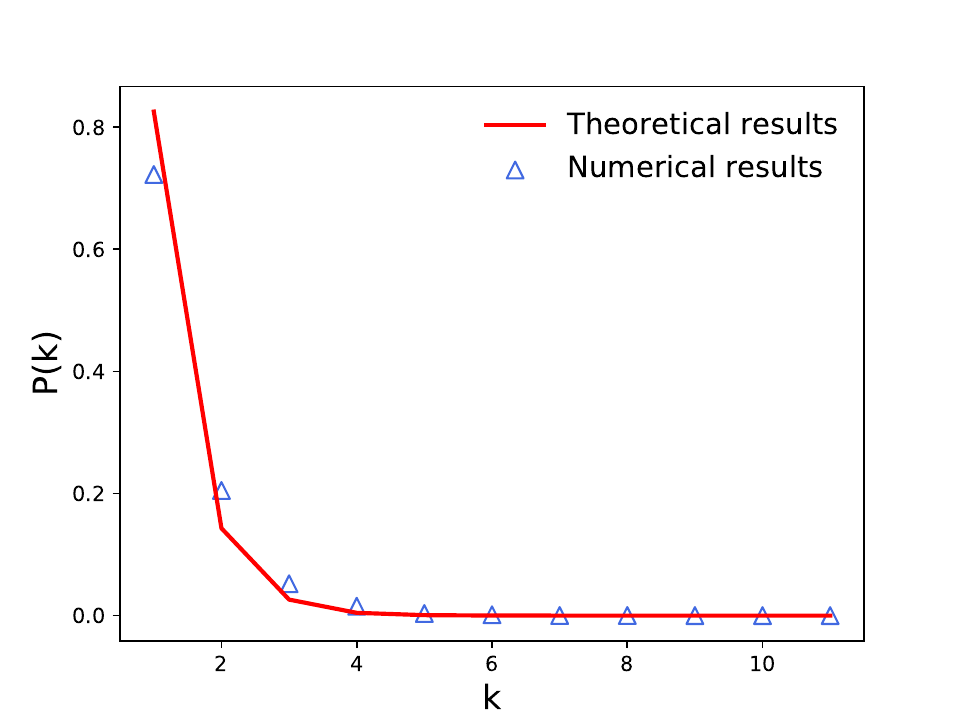}
\parbox{4.5cm}{\centering\footnotesize (e) The degree distribution with $\mu$=2.0}

\end{minipage}
\begin{minipage}[t]{0.33\linewidth}
\centering
 \includegraphics[scale=0.37]{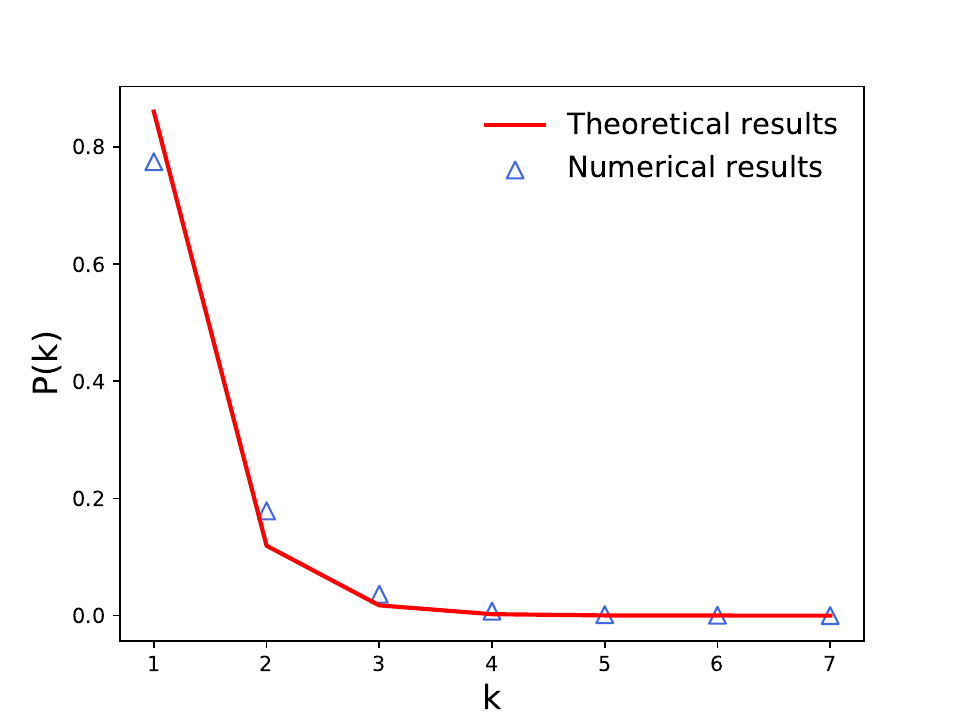}
\parbox{4.5cm}{\centering\footnotesize (f) The degree distribution with $\mu$=2.5}

\end{minipage}
\caption{\label{fig:dist2}The degree distribution with different decrease rates $\mu$ and $\lambda(k)$=$ln(1+k)$: Sub-figures (a)-(f) are respectively set with $\mu$=0.075, 0.15, 0.75, 1.5, 2.0, and 2.5. Red lines indicate theoretical distributions by numerical calculation, and the blue triangle plots indicate distributions obtained by simulations. Distributions with $\mu$=0.075 and 0.15 different from the rest distributions are skew distributions, while others are heterogeneously featured with ``long tail''. }
\end{figure*}


\subsection{Modeling of Network}
Primarily, we describe the construction of our network in simulations based on our model.
In terms of time passing, the time is continuous, and the interval of events that new vertices come and existing vertices are deleted is generated by exponential distributions.
For the connection, each new vertex carried with $m$ edges comes and connects to existing vertices in networks. Allocate new edges of newly coming vertices connecting to existing vertices according to the form of the degree increase rate $\lambda(k)$. For the first form $\lambda(k)$=$k$, we use the $\frac{k_i}{\sum^j k}$ as the probability for vertex $i$ connected to a new vertex. For the second form $\lambda(k)$=$ln(1+k)$, $\frac{ln(1+k_i)}{\sum^j ln(1+k_j)}$ is taken as the probability for vertex $i$ connected to a new vertex.
We utilize $networkx$ packages in Python to carry out the simulation. The procedures of network construction are as follows:

1) Give the termination time $T$, the initial number of vertices $m_0$, and the edge number $m$ of a newly coming vertex carrying.

2) Generate an interval $\Delta t_{\lambda(k)}$ following exponential distribution for the event of vertex increase and another interval $\Delta t_{\mu}$ for the event of edge decrease. If $\Delta t_{\lambda(k)}<\Delta t_{\mu}$, generate a random number in the range [0,1] by $rand()$ multiplied by the sum of degree, and find the corresponding vertex whose degree interval includes the generated number, otherwise, randomly delete an edge in the network.

3) Update the time left.

4) Return 2) until achieving the termination $T$.

Based on the above algorithm, we construct networks demonstrated in Fig. \ref{fig:network}. The sub-figures (a)-(c) show network snapshots with $\lambda(k)=k$, given $\mu$=1.01, 1.21, 1.41 respectively, and sub-figures (d)-(f) present network snapshots with $\lambda(k)=ln(1+k)$, given $\mu$=0.005, 0.01, 0.05. The initial number of vertices is set as $m_0$=10, and the edges of new vertices are $m$=3 for both forms of $\lambda(k)$. $T$ is set as 3.5 for networks in Fig. \ref{fig:network} (a)-(c) and 6.5 for (d)-(f), which is a reasonable length of time for network evolving in terms of the decrease rate $\mu$.
The size and color of a vertex reflect its degree value. In terms of size, the bigger the size of the circle is, the larger degree it has. In terms of color, yellow indicates a large degree, while the other purple indicates a small degree, blue and green indicate a medium degree. The measure of color presenting the degree is only within the same network.
As is shown in Fig. \ref{fig:network}, networks in the sub-figures are all heterogenous, only a few vertices have a large degree value and the majority of vertices have small degree values.
For both situations for the increase rate $\lambda(k)$ ($\lambda(k)$=$k$ and $\lambda(k)$=$ln(1+k)$), the network is getting sparser with the decrease rate $\mu$ increasing according to the snapshots of networks at the same time.

\subsection{Degree Distribution of Limited Resources network}\label{sec:IIIB}
In this sub-section, we focus on the degree distributions according to our vertex degree queueing system.

The degree distributions based on our model are displayed under different situations given different parameters.
For $\lambda(k)=k$, we let $T$ be 10$^5$ large enough for results to satisfy the law of large numbers, and set the degree decrease rate $\mu$=1.5, 2.0, 2.5, 3.0, 4.0, and 5.0. We record the degree value varying with time, calculate the frequency of each degree value appearing, then respectively demonstrate the theoretical degree distributions and simulation degree distribution in sub-figures \ref{fig:dist1}(a)-(f). As we can see in Fig. \ref{fig:dist1}, simulation distribution plots are close to theoretical distributions, which verifies the validity of Thm. \ref{th:1}. Besides, the degree distributions are all heterogenous with a long tail. This heterogenous distribution is due to our mechanism for the increase rate $\lambda(k)=k$, which ensures that the larger degree a vertex has, the quicker its degree grows. Additionally, according to the probabilities on the y-axis from sub-figures \ref{fig:dist1} (a)-(f), the probability of a small degree gets higher as $\mu$ increases, which suggests that the degree distribution becomes more heterogenous with $\mu$ rising. The reason for this is that we can only record degree values within a limited time though it is quite long. The large degree is more difficult to accumulate when the decrease rate $\mu$ becomes larger, leading to the probability of small degree values being higher.
\begin{table}[htbp]
  \centering
  \caption{Similarity of theoretical distributions and simulation distributions based on three measures}
    \begin{tabular}{cccc|cccc}
    \hline
    \hline
          & \multicolumn{3}{c|}{$\lambda(k)=k$} &       & \multicolumn{3}{c}{$\lambda(k)=ln(1+k)$} \\
\cline{1-8}   $\mu$     &$\rho$     & KL    & JS    &$\mu$     &$\rho$     & KL    & JS \\
    1.5   & 0.809 & 0.812 & 0.140 & 0.075 & 0.987 & 0.019 & 0.005 \\
    2     & 0.920 & 0.442 & 0.076 & 0.15  & 0.945 & 0.070 & 0.017 \\
    2.5   & 0.964 & 0.229 & 0.043 & 0.75  & 0.910 & 0.268 & 0.053 \\
    3     & 0.978 & 0.155 & 0.003  & 1.5   & 0.985 & 0.084 & 0.017 \\
    4     & 0.989 & 0.090 & 0.018 & 2     & 0.994 & 0.042 & 0.009 \\
    5     & 0.993 & 0.062 & 0.013 & 2.5   & 0.996 & 0.031 & 0.007 \\
    \hline
    \hline
    \end{tabular}%
  \label{tab:0}%
\end{table}%

For $\lambda(k)=ln(1+k)$, we set $\mu$=0.075, 0.15, 0.75, 1.5, 2.0, and 2.5 and $T=10^5$. We also calculate the frequency of each degree value as their probability according to the law of large numbers. As is shown in Figs. \ref{fig:dist2}(a)-(f), the simulation results are close to the theoretical results, presenting the validation of our theoretical distributions. Besides, the degree distributions with $\lambda(k)=ln(1+k)$ are different from that with $\lambda(k)=k$. As is illustrated in Fig. \ref{fig:dist2}, the degree distributions with $\mu$=0.075 and 0.15 are not similar to ``long tail'' distributions but skew distributions with left skewness for $\mu$=0.075 and right skewness for $\mu$=0.15. This indicates that most vertices have a medium degree, and a few vertices have a small or large degree. However, in sub-figures \ref{fig:dist2} (c)-(f), the degree distributions are heterogenous featured with a ``long tail'' which obeys the Pareto Principle. The reason for that is the expression for the increase rate $\lambda(k)=ln(1+k)$ massively reduces the ``the rich gets richer'' effect compared to $\lambda(k)=k$. Additionally, the probability of a small degree also gets higher as $\mu$ increases, which indicates that degree distribution is more heterogenous as $\mu$ increases.

For Fig. \ref{fig:dist2}, there is an obvious phase change in the degree distributions. We further investigate this phenomenon and analyze the critical values. By simulation, we find that there are two thresholds for the degree decrease rate $\mu s$ which essentially change the distribution plots. In Fig. \ref{fig:fangda} (a), given $\mu=0.368$, the probability of $k=2$ denoted by $P_2$ is larger than $P_3$, while $P_2<P3$ when $\mu<0.366$. Therefore, there is a threshold of $\mu$ between 0.366 and 0.368 for $P_1=P_2$. With $\mu$ decreasing, in Fig. \ref{fig:fangda} (b), $P_2$ is larger than $P_1$ when $\mu>0.347$, while $P_1<P2$ when $\mu<0.346$. The threshold of $\mu$ for $P_1=P_2$ is between 0.346 and 0.347. The reason for the phase transition is theoretically related to Eq. \ref{eq:s}, where $\mu$ effects the value of $S(\mu)$ further effecting $P_k$. Thus, there are thresholds of $\mu$ for $P_1=P_2$ and $P_2=P_3$ which totally change the shape of the distribution plots.


For a better illustration, we present theoretical distributions under two forms of $\lambda(k)$ and with different values of $\mu$ as well as a Power-law distribution with the expression $2m^{2}k^{-3}$ for comparison under a logarithm coordinate in Fig. \ref{fig:logdist}.
In Fig. \ref{fig:logdist}, the distributions of both situations are not typical Power-law distributions though they have a ``long-tail'' character similar to Power-law distributions under ordinary coordinates. In detail, the red curve in Fig. \ref{fig:logdist} (b) representing the distribution with $\mu$=0.4 under $\lambda(k)=ln(1+k)$ is not monotonically decreasing, which corresponds to a skew distribution like distributions in sub-figures. \ref{fig:dist2} (a) and (b).
In Fig. \ref{fig:logdist}, in the $\lambda(k)=k$ cases, our degree distribution shows that more vertices have a relatively small degree, while fewer vertices have a quite small or quite large degree in comparison with the Power-law distribution, while distributions with $\lambda(k)=ln(1+k)$ manifest different forms (a skew distribution or a ``long-tail'' distribution) due to the value of $\mu$.

To measure the similarity of theoretical distributions and simulation results, we also calculate the Pearson correlation coefficient $\rho$, the Kullback-Leiber divergence denoted as KL, and the Jensen-Shannon divergence \cite{js} denoted as JS of them. $\rho$ is calculated by
\begin{equation}
\rho=\frac{E[(P_{ai}-\bar{P_{a}})(P_{bi}-\bar{P_{b}})]}{\sigma_{P_a}}{\sigma_{P_b}}
\end{equation}
where the range of $\rho$ is [-1,1].
\begin{figure}[!t]
\begin{minipage}[t]{0.48\linewidth}
\centering
\includegraphics[scale=0.29]{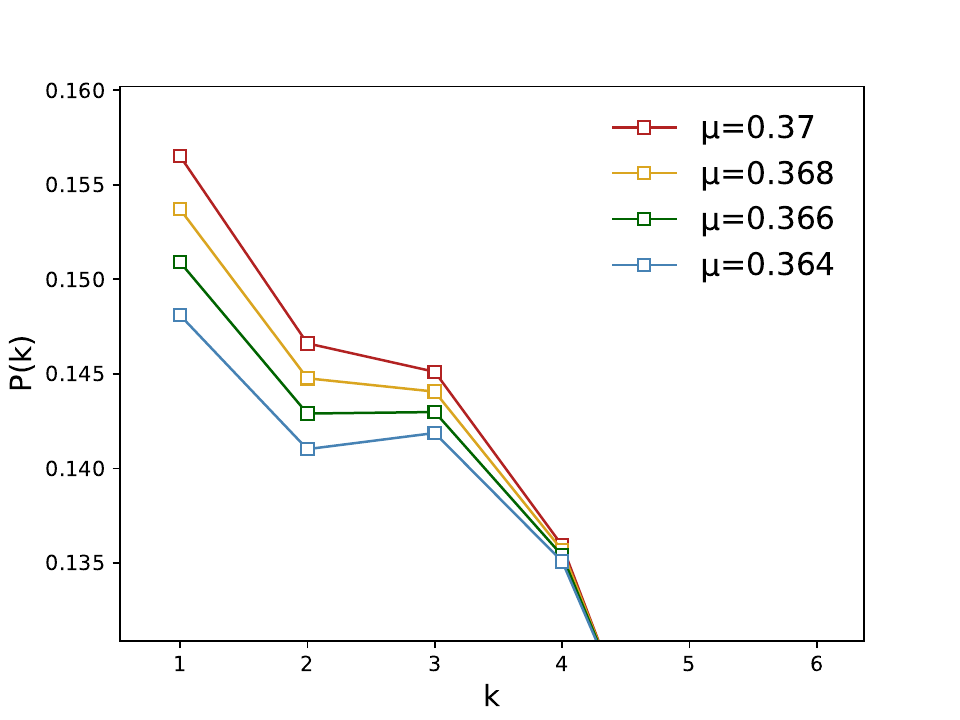}
\parbox{4cm}{\footnotesize\center (a) The threshold of $P_1$=$P_2$}
\end{minipage}
\hspace{0.15cm}
\begin{minipage}[t]{0.48\linewidth}
\centering
 \includegraphics[scale=0.29]{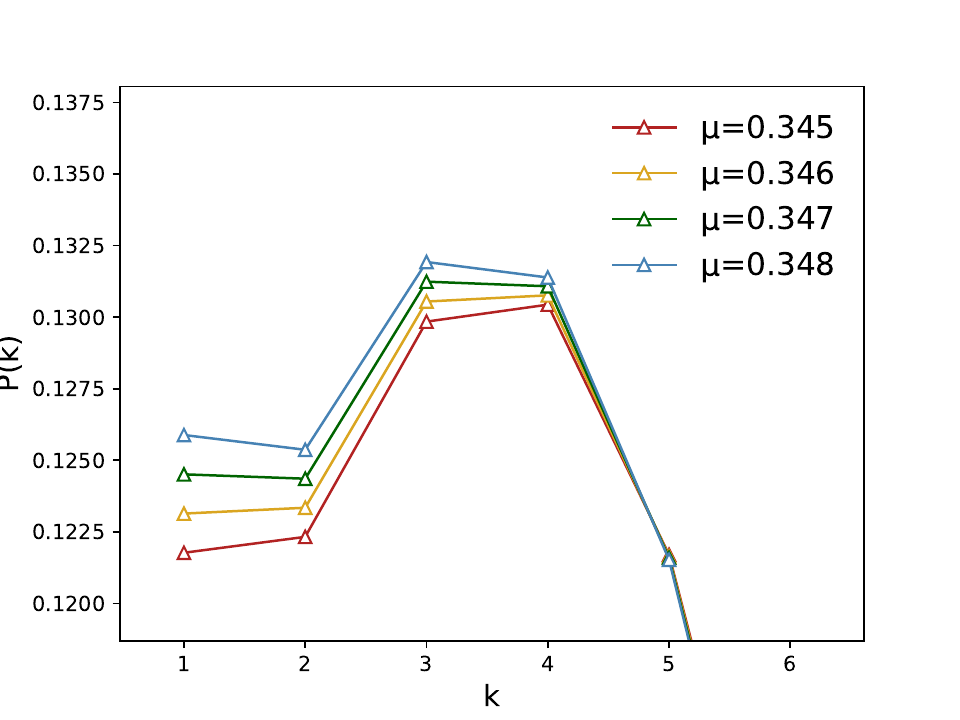}
\parbox{4cm}{\footnotesize \center(b) The threshold of $P_2$=$P_3$}
\end{minipage}
 \caption{\label{fig:fangda} Local map of the degree distributions under $\lambda(k)$=$ln(1+k)$ with different values of $\mu$. We magnify the local section of degree distributions with the form $\lambda=ln(1+k)$ to observe the phase transition in the distribution. (a) The value of $\mu$ is set as 0.364, 0.366, 0.368, 0.37 marked by squares with different colors. The threshold of $\mu$ for $P_1=P_2$ is between 0.366 and 0.368. (b) The value of $\mu$ is set as 0.345, 0.346, 0.347, 0.348 marked by triangles with different colors, and the threshold of $\mu$ for $P_2=P_3$ is between 0.346 and 0.347.}
\end{figure}

\begin{figure}[hpt]
\begin{minipage}[t]{0.48\linewidth}
\centering
\includegraphics[scale=0.29]{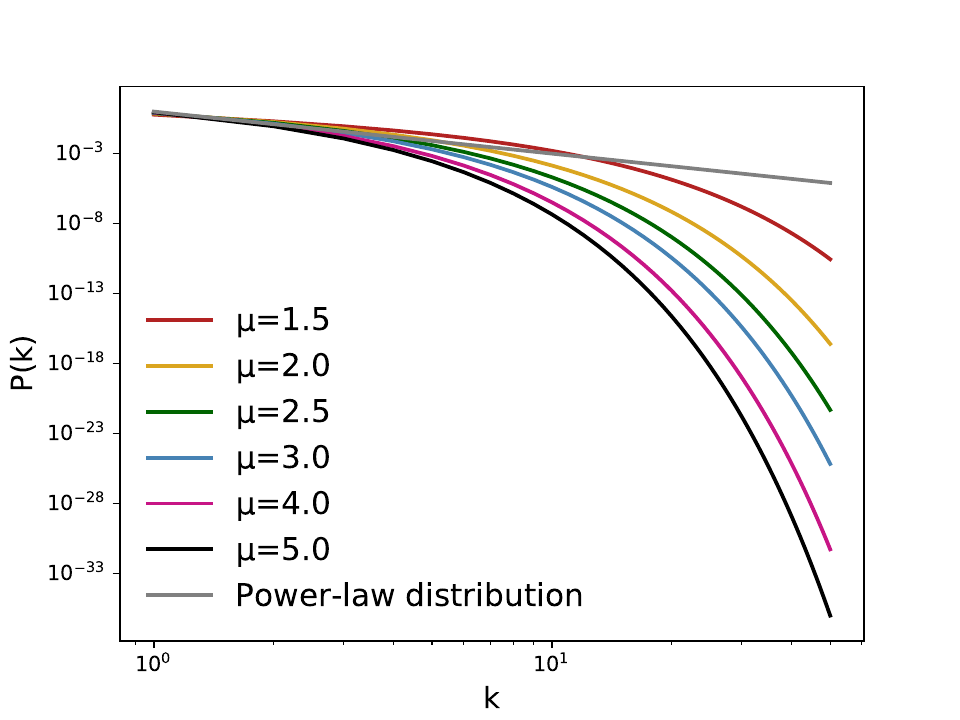}
\parbox{4cm}{\footnotesize\center (a) $\lambda(k)$=$k$}
\end{minipage}
\hspace{0.15cm}
\begin{minipage}[t]{0.48\linewidth}
\centering
 \includegraphics[scale=0.29]{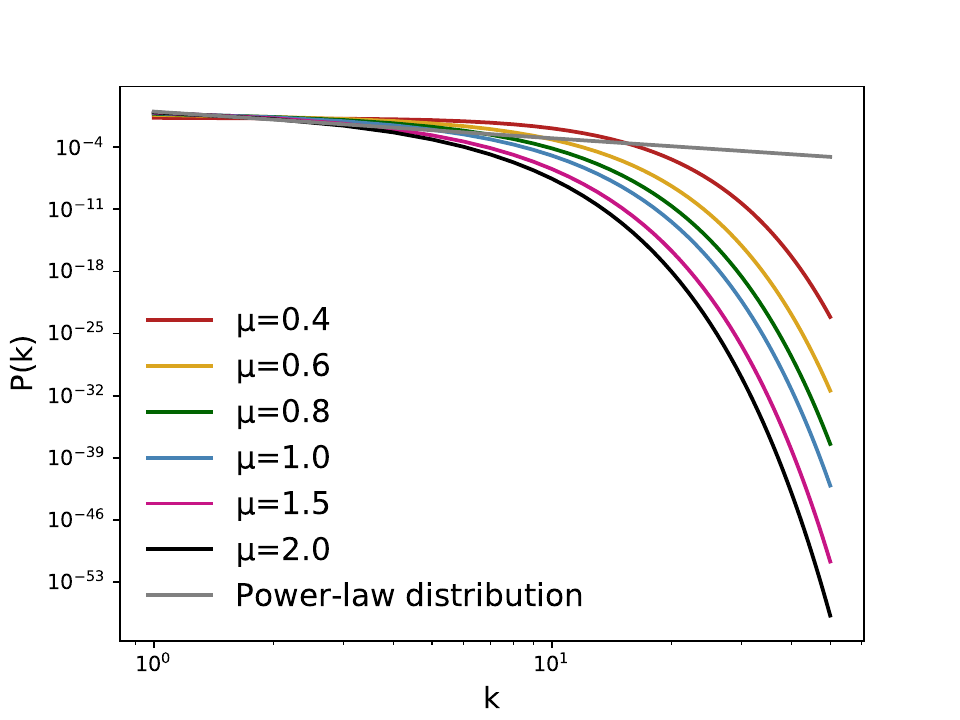}
\parbox{4cm}{\footnotesize \center(b) $\lambda(k)$=$ln(k+1)$}
\end{minipage}
 \caption{\label{fig:logdist} Degree distributions as a function of the degree decrease rate $\mu$ under two forms of the degree increase rate $\lambda(k)$:
 Degree distributions we present in the logarithmic coordinate are theoretical distributions we obtained according to our model. Distributions with different values of $\mu$ are marked by different colors, and the Power-law distribution is colored grey. According to the distribution plot in logarithmic coordinates, the theoretical degree distributions in our model are not Power-law distributions.}
\end{figure}
The Kullback-Leibler divergence is an asymmetric measure of the difference between two distributions, expressed as
\begin{equation}\label{eq:kl}
KL(P_a||P_b)=\sum P_alog\frac{P_a}{P_b}
\end{equation}
where the divergence is in the range of $(0,\infty)$.

The Jensen-Shannon divergence is based on the KL divergence but symmetrical, expressed as
\begin{equation}\label{eq:kl}
JS(P_a||P_b)=\frac{1}{2}KL(P_a||P_b)+\frac{1}{2}KL(P_b||P_a)
\end{equation}
which ranges between 0 and 1.

The results of these three measures under the two situations for $\lambda(k)$ with different values of $\mu$ are illustrated in Tab. \ref{tab:0}. We see that the Pearson correlation coefficients of theoretical and simulation distributions are all larger than 0.9 except for one value under the parameter of $\lambda(k)=k$ and $\mu=1.5$. The largest value of KL divergence and JS divergence is respectively 0.82 and 0.14. The small JS divergences in all situations suggest that the theoretical distributions agree with simulation distributions on the whole.

\subsection{Simulations of fitting real networks}
In this subsection, we apply our model to fit real-world networks from three different fields, a co-authorship network \cite{data1}, an email communication network \cite{data2}, and an autonomous system network \cite{data3}. For comparison, we also apply one of the latest evolving network models, MoncSid-N \cite{Moncsid} to fit these networks. Application analysis is performed by measuring the goodness of fitting via similarity measurement.


\subsubsection{Co-authorship network}
The distribution marked by blue circles illustrated in Fig. \ref{fig:e1} is the degree distribution of a co-authorship network. The practical network represents co-authorships in the area of network science with a total vertex umber of 1461 and a total edge number of 2742, where nodes represent authors and edges represent cooperation between authors. The network is undirected and unweighted and does not contain loops.

As is illustrated in Fig. \ref{fig:e1} marked by blue circles, the degree distribution of the co-authorship network is heterogenous
The fraction is below 0.1 when $k$=4, and it is very small and stays steady when $k>7$ featured with a ``long tail''. However, the degree distribution of the co-authorship network is not a typical Power-law distribution, as we can see in sub-figures (b) and (d). The degree distribution manifests as a curve going downward but not a straight line plot in the logarithmic coordinate.
Figs. \ref{fig:e1} (a) and (b) show the fitting results of $\lambda(k)=k$ where we set the parameter $\mu=1.28$ for the situation of $\lambda(k)=k$, and in Figs. \ref{fig:e1} (c) and (d), we set the parameter $\mu=0.58$ for $\lambda(k)=ln(1+k)$. In both sub-figures, most circles are around the theoretical lines, suggesting that the theoretical distribution is in accordance with the practical degree distribution. Under the logarithm coordinate in sub-figures \ref{fig:e1} (b) and (d), blue circles are not around the theoretical line when $k>10$. This is because the fraction of vertices with a degree larger than 10 is the same in the practical network owing to a small volume of data. Nevertheless, our theoretical distributions of both expressions for $ln(1+k)$ have a good fitting on the whole.


\begin{figure}[t]
\centering

\begin{minipage}[t]{0.5\linewidth}
\centering
\includegraphics[scale=0.28]{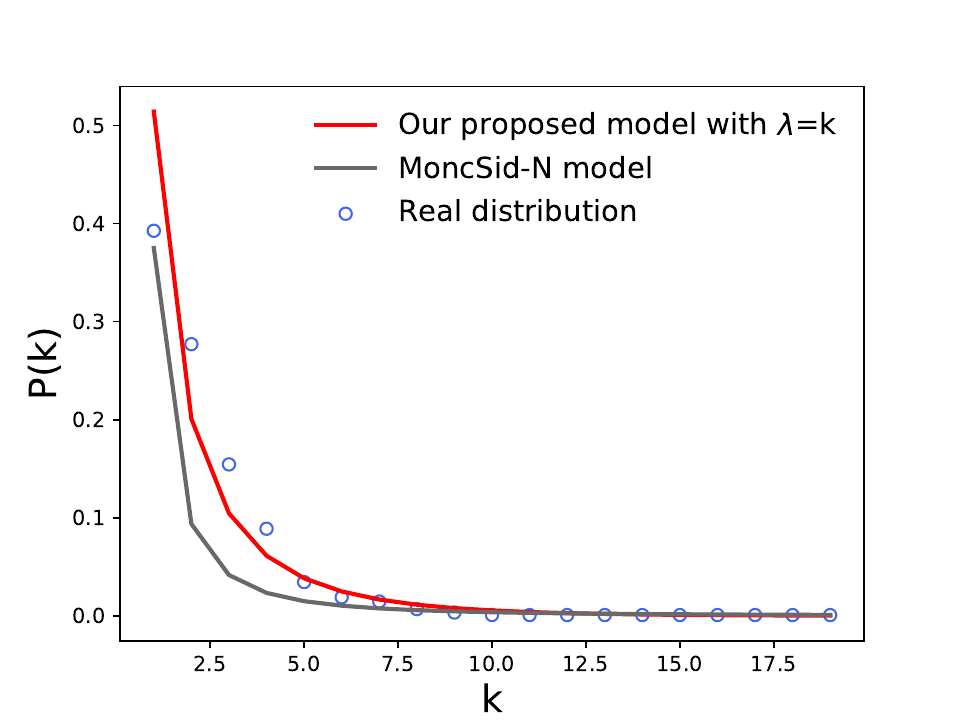}
\parbox{5cm}{\footnotesize \hspace{0.5cm}(a) $\lambda(k)$=$k$ under the log COOR}
\end{minipage}%
\begin{minipage}[t]{0.5\linewidth}
\centering
\includegraphics[scale=0.28]{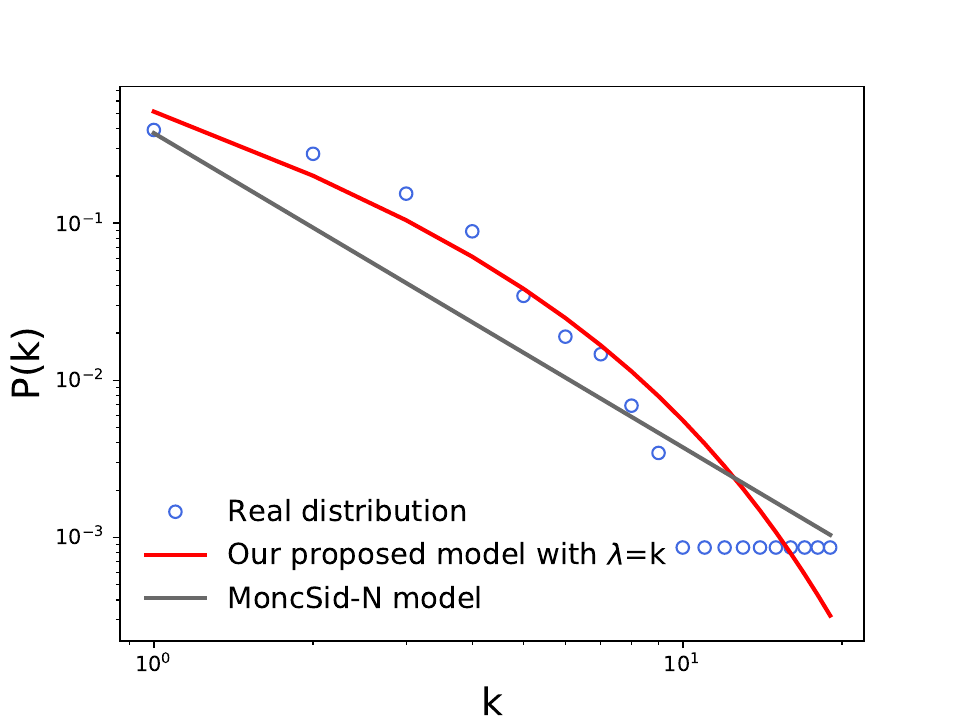}
\parbox{5cm}{\footnotesize \hspace{0.5cm}(b) $\lambda(k)$=$k$ under the normal \par \qquad \qquad \qquad \quad COOR}
\end{minipage}%

\begin{minipage}[t]{0.5\linewidth}
\centering
\includegraphics[scale=0.28]{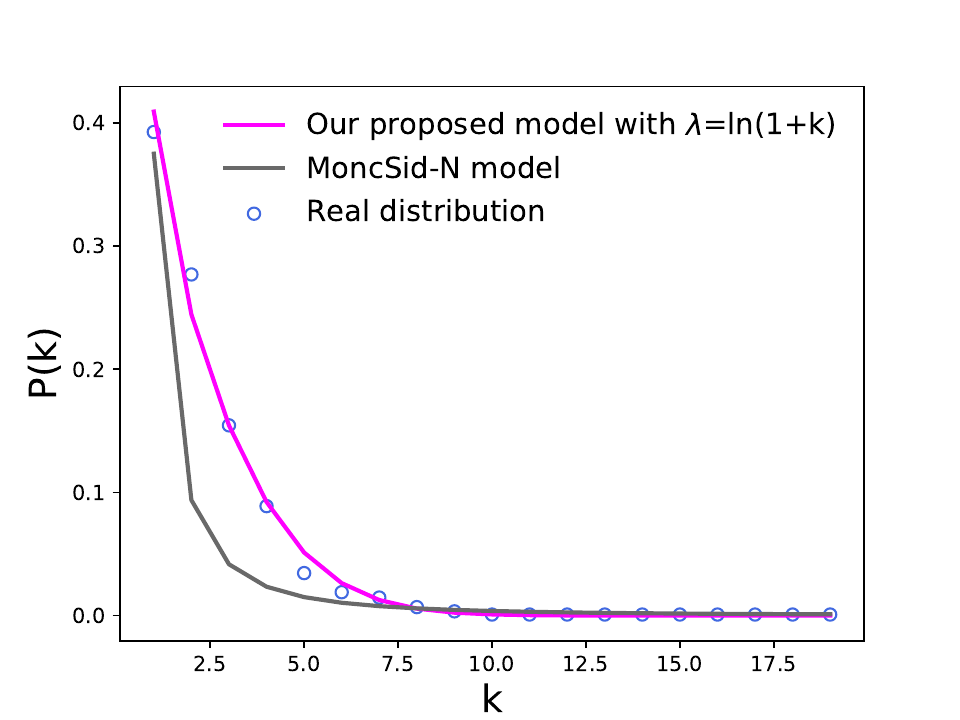}
\parbox{5cm}{\footnotesize \hspace{0.5cm}(c) $\lambda(k)$=$ln(1+k)$ under the log \par \qquad \qquad \qquad \quad COOR}
\end{minipage}%
\begin{minipage}[t]{0.5\linewidth}
\centering
\includegraphics[scale=0.28]{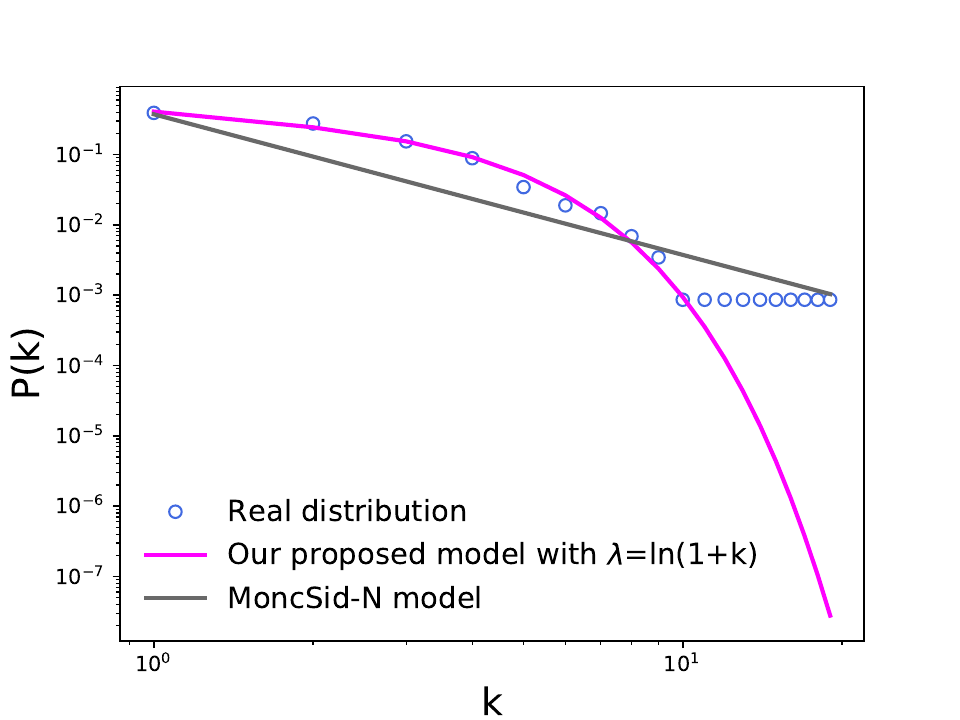}
\parbox{5cm}{\footnotesize \hspace{0.5cm}(d) $\lambda(k)$=$ln(1+k)$ under the \par \qquad \qquad \qquad normal COOR}
\end{minipage}%

\centering
\caption{\label{fig:e1} Comparison of the theoretical and real degree distributions of the co-authorship network: Two situations of the increase rate $\lambda(k)=k$ and $\lambda(k)=ln(1+k)$ of our degree queueing system model are respectively applied to the co-authorship in the field of network science. For each form of $\lambda(k)$, we use different colors to present the theoretical degree distribution and the real degree distribution under the normal coordinate and the logarithmic coordinate, turquoise for
$\lambda(k)=k$ and magenta for $\lambda(k)=ln(1+k)$. The MoncSid-N model is also applied to fit the real network for comparison, marked by the grey line.}
\end{figure}

\begin{figure}[htbp]
\centering

\begin{minipage}[t]{0.5\linewidth}
\centering
\includegraphics[scale=0.28]{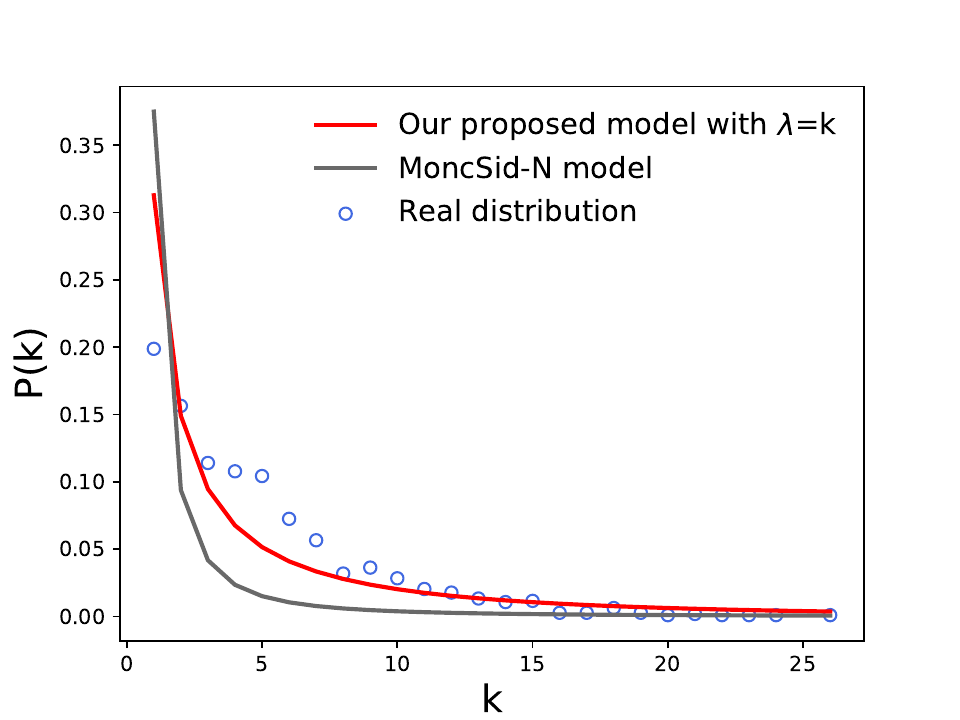}
\parbox{5cm}{\footnotesize \hspace{0.5cm}(a) $\lambda(k)$=$k$ under the log COOR}
\end{minipage}%
\begin{minipage}[t]{0.5\linewidth}
\centering
\includegraphics[scale=0.28]{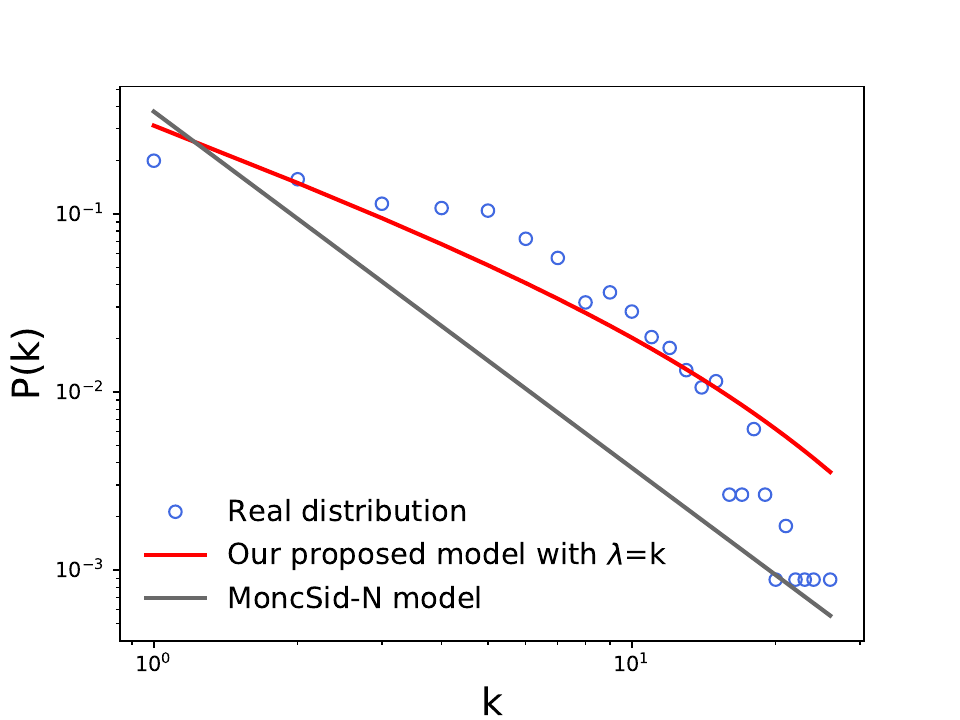}
\parbox{5cm}{\footnotesize \hspace{0.5cm}(b) $\lambda(k)$=$k$ under the normal \par \qquad \qquad \qquad \quad COOR}
\end{minipage}%

\begin{minipage}[t]{0.5\linewidth}
\centering
\includegraphics[scale=0.28]{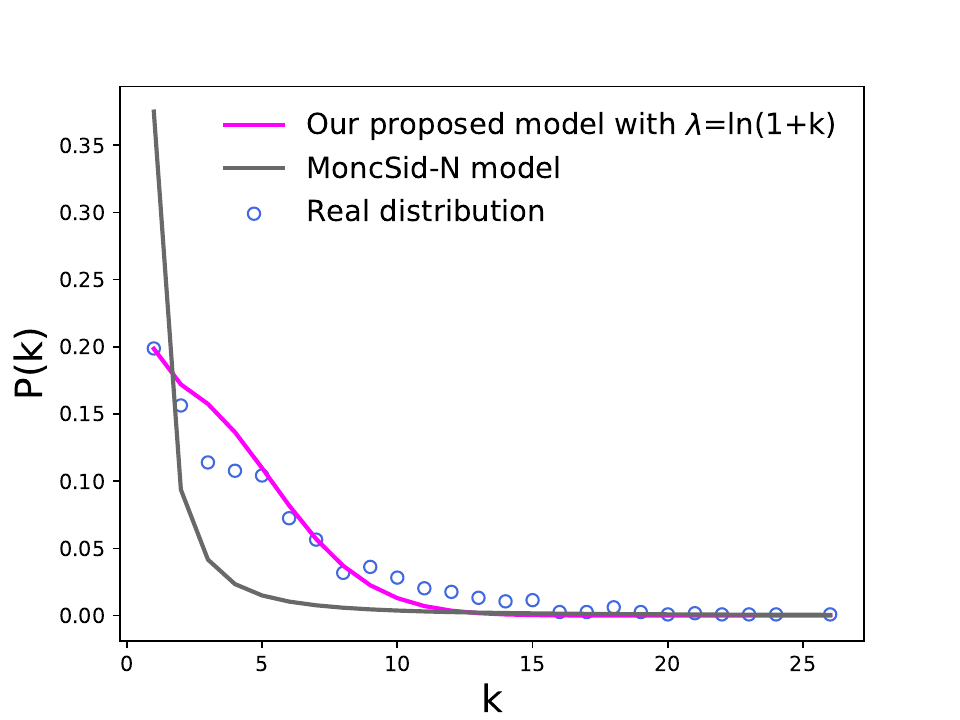}
\parbox{5cm}{\footnotesize \hspace{0.5cm}(c) $\lambda(k)$=$ln(1+k)$ under the log \par \qquad \qquad \qquad \quad COOR}
\end{minipage}%
\begin{minipage}[t]{0.5\linewidth}
\centering
\includegraphics[scale=0.28]{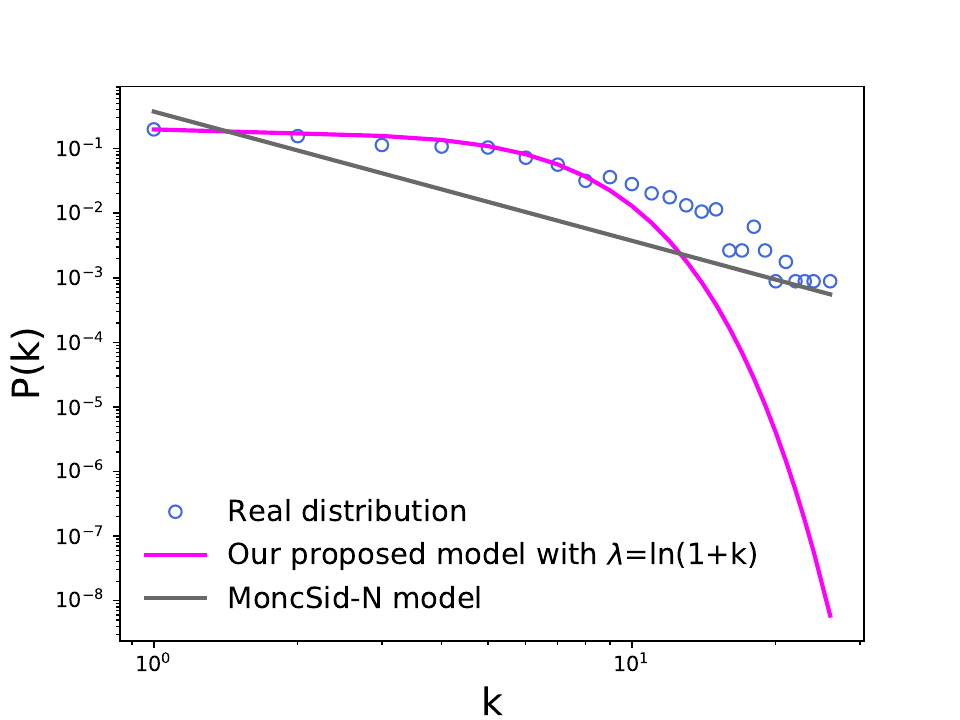}
\parbox{5cm}{\footnotesize \hspace{0.5cm}(d) $\lambda(k)$=$ln(1+k)$ under the \par \qquad \qquad \qquad normal COOR}
\end{minipage}%

\centering
\caption{\label{fig:e2} Comparison of the theoretical and real degree distributions of the email communication network: Two situations of the increase rate $\lambda(k)=k$ and $\lambda(k)=ln(1+k)$ of our degree queueing system model are respectively applied to the email communication at the University Rovira i Virgili. For each form of $\lambda(k)$, two different colors are selected to present the theoretical degree distribution and the real degree distribution under the normal coordinate and the logarithmic coordinate, turquoise for $\lambda(k)=k$ and magenta for $\lambda(k)=ln(1+k)$. The MoncSid-N model is also applied to fit the real network for comparison, marked by the grey line.}
\end{figure}

\subsubsection{Email communication network}
The practical degree distribution of an email communication network is illustrated in Fig. \ref{fig:e2}. This network represents co-authorships in the area of network science with a total vertex umber of 1133 and a total edge number of 5451, where nodes indicate users and edges represent that at least one email was sent between two users. The network is undirected and unweighted, as well as excluding loops.

In Figs. \ref{fig:e2} (a) and (c), the practical distribution of the email communication network is also heterogenous.
The degree distribution of this practical network is also not a Power-law distribution but a downward curve as we can see in the logarithm coordinate shown in Fig. \ref{fig:e2} (b) and (d), yet it is featured with a ``long tail'' shown in (a) and (c).
We set the parameter $\mu=1.05$ for the situation of $\lambda(k)=k$ presented in Figs. \ref{fig:e2} (a) and (b), and set the parameter $\mu=0.40$ for $\lambda(k)=ln(1+k)$ illustrated in Figs. \ref{fig:e2} (c) and (d). From Figs. \ref{fig:e2} (a) and (c), we can see that the probability of $k$=1 is 0.3 according to the theoretical distribution of $\lambda(k)=k$, higher than the real degree distribution with a probability $P(k)$=0.2, however, they are close on the whole. Comparing sub-figures (a) and (c), the theoretical line with $\lambda=k$ is closer to real data circles than the line with $\lambda=ln(1+k)$, which suggests that the theoretical distribution with $\lambda(k)=ln(1+k)$ has better goodness of fitting than that with $\lambda(k)=ln(1+k)$ from the normal coordinate. From the plots under the logarithm coordinate shown in Fig. \ref{fig:e3} (b), the theoretical line is like a straight line with a small curvature, indicating that the theoretical distribution is close to a Power-law distribution. The circle plot representing the practical degree distribution has a larger curvature compared to the theoretical line, which leads to the tail of the line does not agree with the circle plot. The situation in Fig. \ref{fig:e2} (d) is different, the curvature of the theoretical line is larger than that of the practical distribution.

To view the plots as a whole, two theoretical distributions fit the real degree distribution well. 


\begin{figure}[htbp]
\centering

\begin{minipage}[t]{0.5\linewidth}
\centering
\includegraphics[scale=0.28]{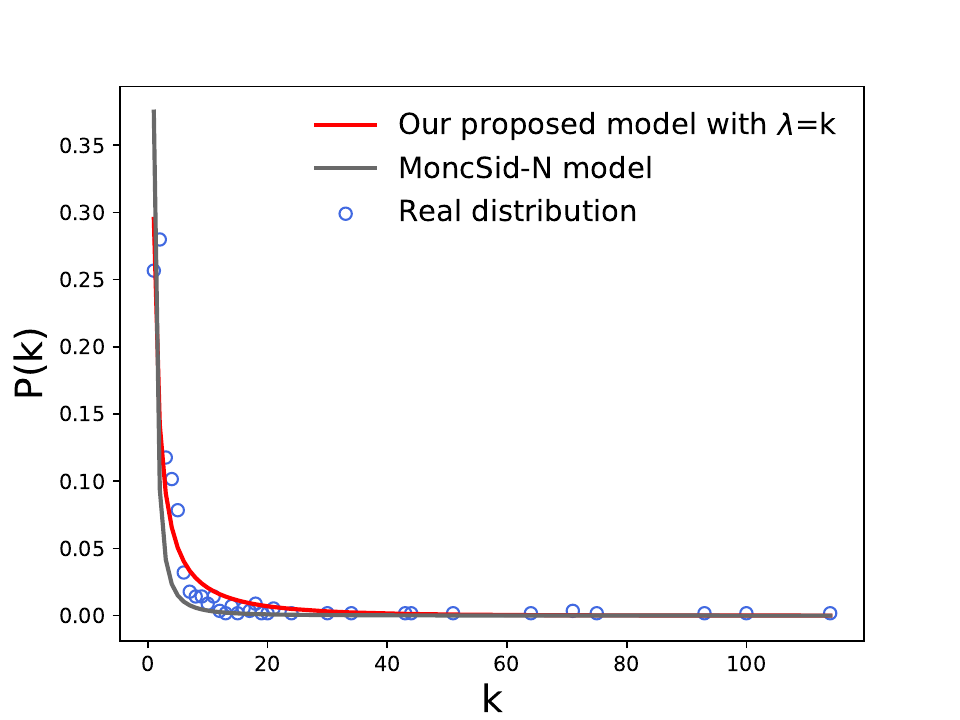}
\parbox{5cm}{\footnotesize \hspace{0.5cm}(a) $\lambda(k)$=$k$ under the log COOR}
\end{minipage}%
\begin{minipage}[t]{0.5\linewidth}
\centering
\includegraphics[scale=0.28]{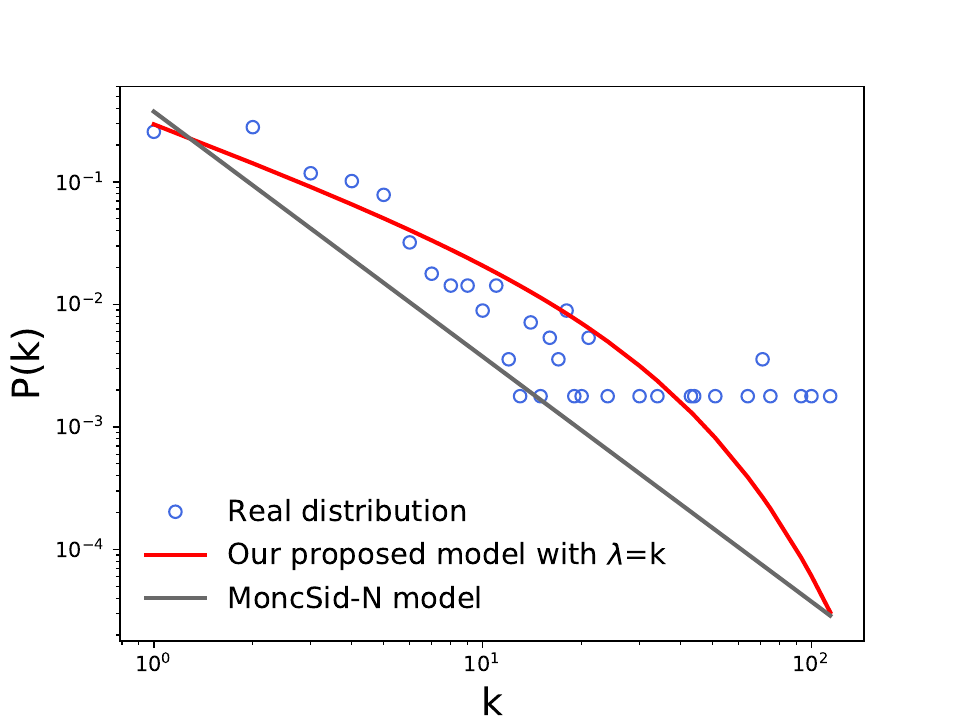}
\parbox{5cm}{\footnotesize \hspace{0.5cm}(b) $\lambda(k)$=$k$ under the normal \par \qquad \qquad \qquad \quad COOR}
\end{minipage}%

\begin{minipage}[t]{0.5\linewidth}
\centering
\includegraphics[scale=0.28]{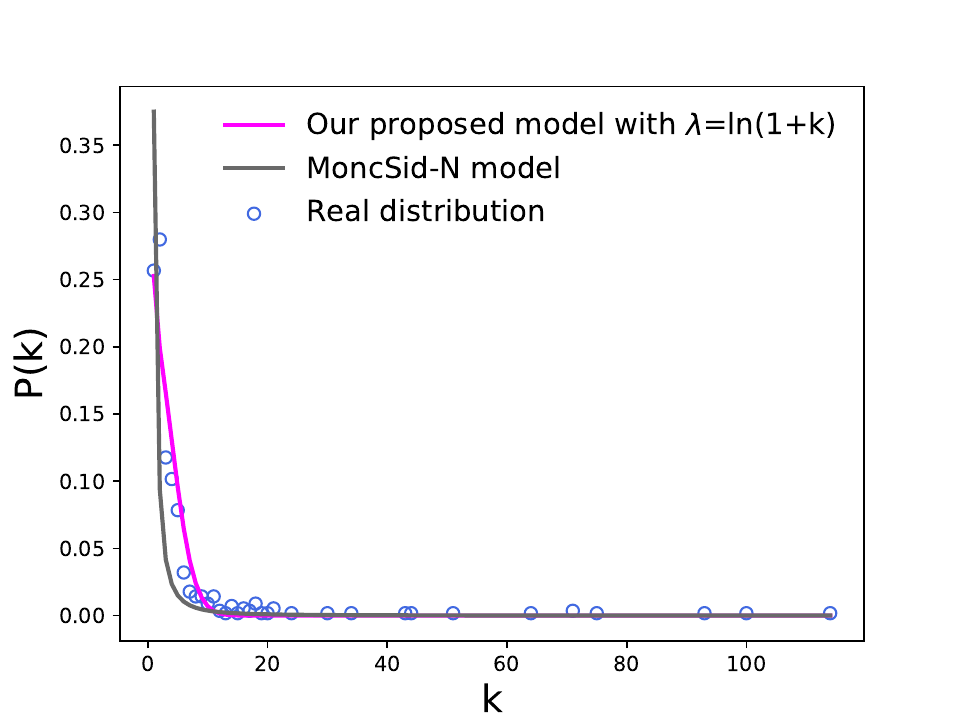}
\parbox{5cm}{\footnotesize \hspace{0.5cm}(c) $\lambda(k)$=$ln(1+k)$ under the log \par \qquad \qquad \qquad \quad COOR}
\end{minipage}%
\begin{minipage}[t]{0.5\linewidth}
\centering
\includegraphics[scale=0.28]{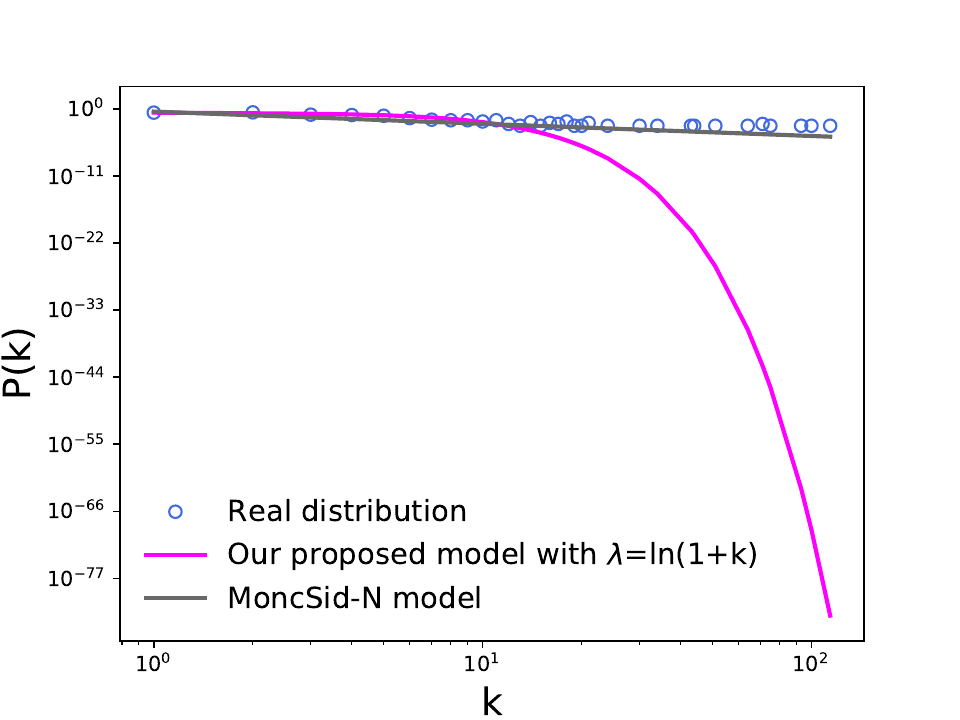}
\parbox{5cm}{\footnotesize \hspace{0.5cm}(d) $\lambda(k)$=$ln(1+k)$ under the \par \qquad \qquad \qquad normal COOR}
\end{minipage}%

\centering
\caption{\label{fig:e3} Comparison of the theoretical and real degree distributions of the autonomous system (AS) network: Two forms of the increase rate $\lambda(k)=k$ and $\lambda(k)=ln(1+k)$ of our degree queueing system model are respectively applied to the AS network. For each form of $\lambda(k)$, two different colors are selected to present the theoretical degree distribution and the real degree distribution under the normal coordinate and the logarithmic coordinate, turquoise for $\lambda(k)=k$ and magenta for $\lambda(k)=ln(1+k)$. The MoncSid-N model is also applied to fit the real network for comparison, marked by the grey line.}
\end{figure}

\begin{table*}[htbp]
  \centering
  \caption{Results and parameters of degree distribution fitting of three practical networks}
    \begin{tabular}{cccc|ccc|ccc}
    \hline
    \hline
          & \multicolumn{3}{c|}{Co-authorships} & \multicolumn{3}{c|}{Email communication} & \multicolumn{3}{c}{Autonomous systems} \\
\cline{2-10}       & $\lambda(k)=k$   & $\lambda(k)=ln(1+k)$ & MoncSid-N & $\lambda(k)=k$   & $\lambda(k)=ln(1+k)$ & MoncSid-N & $\lambda(k)=k$ & $\lambda(k)=ln(1+k)$ & MoncSid-N \\
    $\mu$     & 1.28  & 0.58 &  & 1.05  & 0.40 & & 1.04  & 0.44 & \\
    $\rho$     & 0.9581 & 0.9960 & 0.8992 & 0.9055 & 0.9808 & 0.7602 & 0.9162 & 0.9580 & 0.7889 \\
    KL    & 0.0549 & 0.0421 & 0.1672 & 0.0990 & 0.1752 & 0.5212 & 0.1457 & 2.0401 & 0.3274 \\
    JS    & 0.0139 & 0.0040 & 0.0419 & 0.0260 & 0.0219 & 0.1265 & 0.0354 & 0.0341 & 0.0813 \\
    \hline
    \hline
    \end{tabular}%
  \label{tab:real}%
\end{table*}%
\subsubsection{Autonomous systems traffic-flow network}
The practical degree distribution of an autonomous system (AS) traffic-flow network is presented in Fig. \ref{fig:e3} marked by circles. AS is sub-graphs of routers comprising the Internet, where nodes represent routers and edges are traffic flows between two routers. The whole dataset contains 733 networks over 785 days from November 8th 1997 to January 2nd 2000, and AS networks include both the addition and deletion of the nodes and edges over time. Hereby, we choose a snapshot of the evolving As network.

As is illustrated in Fig. \ref{fig:e3}, the degree distribution of the AS network is also a ``long-tail'' distribution, where
the fraction of vertices with degree $k$=1 is over 0.25, and the fractions of $k$>5 are all below 0.05.
We set $\mu$=1.04 for the theoretical distribution with $\lambda(k)=k$, and $\mu$=0.44 for that with $\lambda(k)=ln(1+k)$, respectively demonstrated in Figs. \ref{fig:e3} (a) and (c) in normal coordinates, (b) and (d) in logarithmic coordinates.
For $\lambda(k)=k$, most circles are around the theoretical line under normal coordinates as is illustrated in Fig. \ref{fig:e3} (a), suggesting that the theoretical distribution with $\lambda(k)=k$ fits the real degree distribution well. While under the logarithm coordinate in Fig. \ref{fig:e3} (b), the trend of tail circles does not correspond to the theoretical line. This is because of the non-uniform scale of axes of logarithm coordinates which magnifies the difference between the real distribution and the theoretical one when $k$ is getting large. For $\lambda(k)=ln(1+k)$, under the normal coordinate, most circles are also around the theoretical line except for $k$=2 whose theoretical probability is 0.25 approximately, less than the practical fraction which is about 0.28. However, under the logarithm coordinate, the tail of the theoretical line with $\lambda(k)=ln(1+k)$ is too much below the blue circles Fig.\ref{fig:e3} (d), which shows a poor fitting result. In general, there is superb goodness of fit of theoretical distribution with $\lambda(k)=k$ compared to $\lambda(k)=ln(1+k)$ for the real degree distributions of AS network.

We measure the goodness of fitting for practical degree distributions based on three measures introduced in sub-section \ref{sec:IIIB}, as well as the values of parameter $\mu$ are displayed in Tab. \ref{tab:real}, where the values of $\rho$ are all larger than 0.90, and the JS divergence are all less than or equal to 0.0354. For the KL divergence, except for the value of the AS network with $\lambda(k)=ln(1+k)$, the values of other fitting results are all less than or equal to 0.1752. This shows good fitting results for three real networks via applying our theoretical distributions.


For comparison, we also utilize other evolving network models, MoncSid-N \cite{Moncsid} to fit real networks by comparing their degree distributions. The MoncSid-N model is marked by grey lines in Fig. \ref{fig:e1}, \ref{fig:e2} and \ref{fig:e3}, where we can see that under the logarithm coordinate, the degree distributions of MoncSid-N are Power-law distributions, manifesting a straight line. In detail, for the co-authorship network, the degree distributions of our model $\lambda(k)=k$ and $\lambda(k)=ln(1+k)$ are both closer to the circle plot of the real distribution compared to the MoncSid-N model, shown in Fig. \ref{fig:e1}. For better validation, in Tab. \ref{tab:real}, the Person correlation of our theoretical distributions is larger than that of the MoncSid-N model, while the KL and JS divergence of our theoretical distribution are smaller compared to that of the MoncSid-N model. A similar situation can be found for the email communication network in Fig. \ref{fig:e2}. For the autonomous system network, our proposed model with $\lambda(k)=k$ is more appropriate for describing its real degree distribution, compared to the MoncSid model and our model with $\lambda(k)=ln(1+k)$, presented in Fig. \ref{fig:e3}. Correspondingly, in Tab. \ref{tab:real}, in terms of fitting the autonomous system network, our model with $\lambda(k)=k$ shows its advantages according to the three metrics.

\section{\label{sec:V}Conclusions and Outlook}
In this paper, we propose an evolving network from the perspective of the vertex degree variation. We let the degree increase rate in the degree queueing system be a function positively correlated with the degree, which promises the preferential attachment mechanism in many real-world networks. The degree distribution of our model is analyzed by a Markov process under two situations for the degree increase rate $\lambda(k)$ that are $\lambda(k)=k$ and $\lambda(k)=ln(1+k)$. The new distributions we obtained are heterogenous and featured with a ``long tail'', but not typical Power-law distributions as we see under the logarithm coordinate. It can well describe degree distributions of some real networks in different areas.

There are still some issues worth further investigating. There are various other expressions for the degree increase rate $\lambda(k)$ as a function of the degree $k$ while we only discuss two specific situations. A general paradigm is expected to be formed for analyses of degree distributions. Besides, our network model can be hopefully applied to different scenarios in various fields. Our evolving networks can be combined with representation learning models, such as dynamic graph neural networks (GNN), and used to study the dynamic properties of Peer-to-Peer (P2P) networks. We will use the proposed evolving network model and mechanism to construct dynamic P2P networks with node additions and departures for further study. We will focus on these issues and applications in the near future.

\bibliographystyle{IEEETran}
\begin{IEEEbiography}[{\includegraphics[width=1in,clip,keepaspectratio]{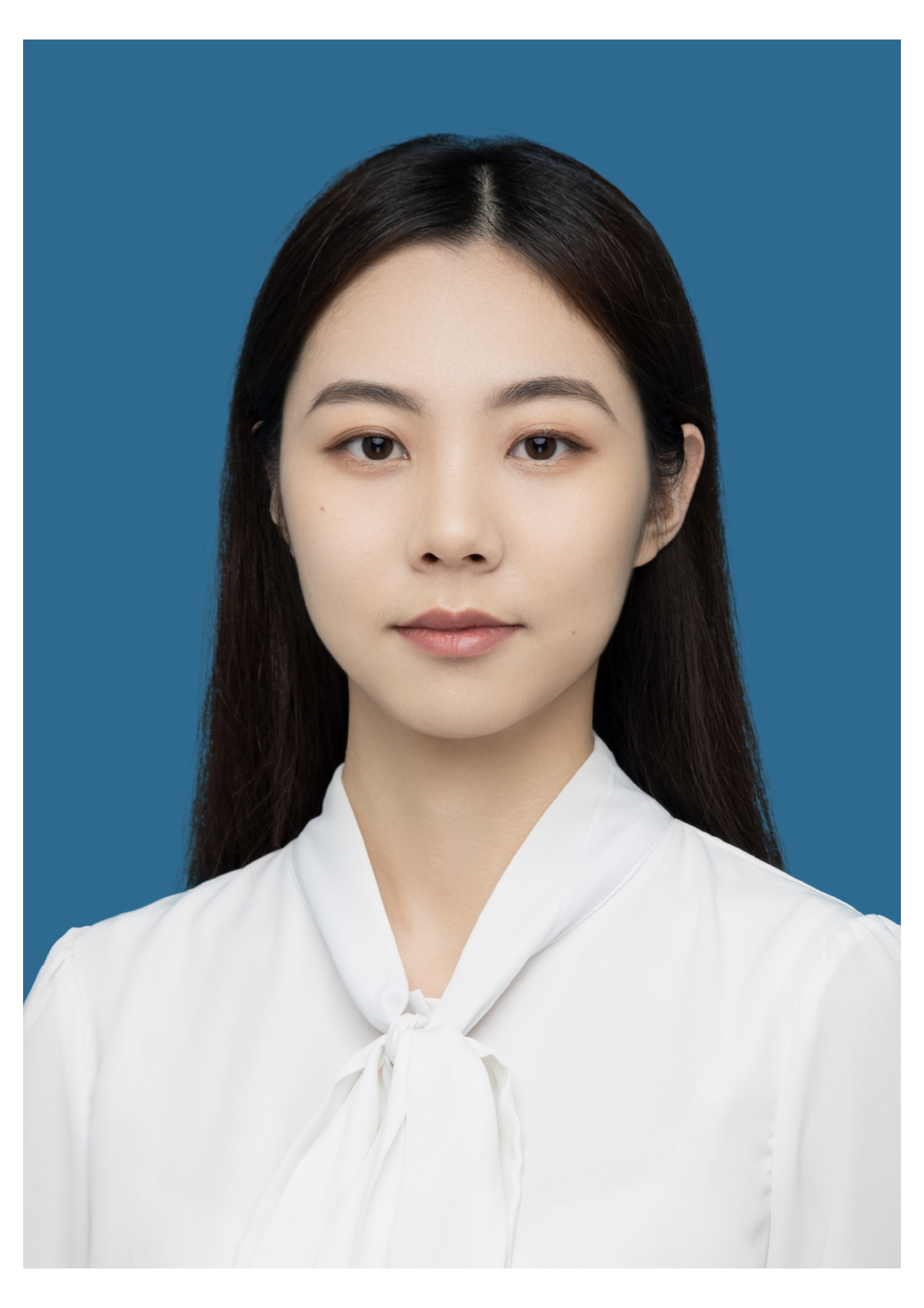}}]{Yuhan Li}
is pursuing the B.E. degree in data science and big data technology from the College of Artificial Intelligence, Southwest University, Chongqing, China. Her research interests include complex networks, contagious processes, epidemic modeling, stochastic process, queueing theory and their applications.
\end{IEEEbiography}

\begin{IEEEbiography}[{\includegraphics[width=1in,clip,keepaspectratio]{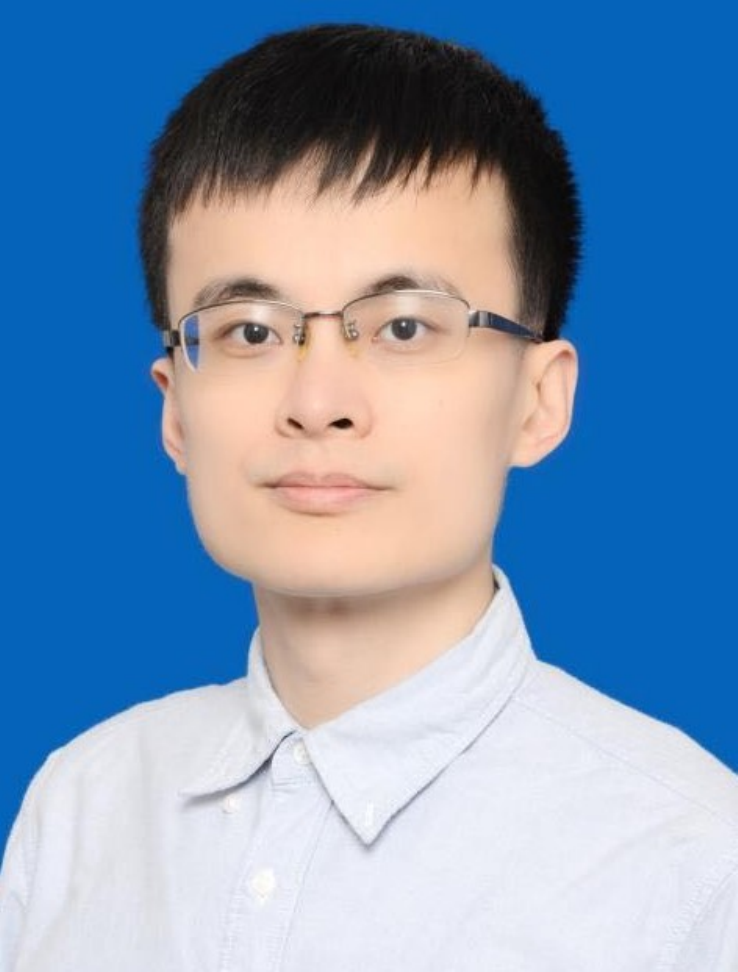}}]{Minyu Feng}
received the B.S. degree in mathematics from the University of Electronic Science and Technology of China in 2010; the Ph.D. degree in computer science from the University of Electronic Science and Technology of China in 2018. From 2016 to 2017, he was a visiting scholar with the Potsdam Institute for Climate Impact Research, Potsdam, Germany, and Humboldt University of Berlin, Berlin, Germany. Since 2019, he has been an associate professor in the College of Artificial Intelligence, Southwest University, Chongqing, China. He is an academic member of IEEE, CCF and CAA. He currently serves as a guest editor in Entropy and Frontiers in Physics. His research interests include stochastic processes, complex systems, evolutionary games and social computing.
\end{IEEEbiography}

\begin{IEEEbiography}[{\includegraphics[width=1in,clip,keepaspectratio]{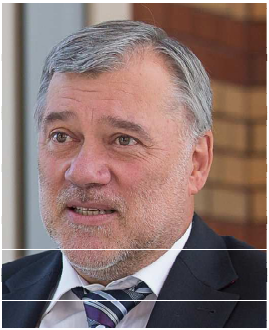}}]{J\"{u}rgen Kurths}
received the B.S. degree in mathematics from the University of Rostock; the Ph.D. degree from the Academy of Sciences of the German Democratic Republic in 1983; the Honorary degree from N.I. Lobachevsky State University of Nizhny Novgorod in 2008; and the Honorary degree from Saratow State University in 2012.

From 1994 to 2008, he was a full Professor
with the University of Potsdam, Potsdam, Germany.
Since 2008, he has been a Professor of nonlinear
dynamics with Humboldt University, and the Chair
of the Research Domain Complexity Science with the Potsdam Institute
for Climate Impact Research, Germany. He is the author of
more than 700 papers, which are cited more than 60.000 times (H-index: 111).
His main research interests include synchronization, complex networks, time
series analysis, and their applications.

Dr. Kurths is a fellow of the American Physical Society, the Royal Society of 1023
Edinburgh, and the Network Science Society. He is a member of the Academia 1024
Europaea. He was the recipient of the Alexander von Humboldt
Research Award from India, in 2005, and from Poland in 2021, the Richardson medal of the European
Geophysical Union in 2013, and eight Honorary Doctorates. He is a highly
cited Researcher in Engineering. He is editor-in-chief of CHAOS and on
the editorial boards of more than 10 journals.
\end{IEEEbiography}

\end{document}